\newcommand{\hr}{{\mathcal H}}
\newcommand{\cn}{{\mathcal N }}
\newcommand{\cs}{{\mathcal S}}
\newcommand{\crr}{{\mathcal R}}
\newcommand{\fr}{{\mathcal F}}
\newcommand{\fri}{{\mathfrak I}}
\newcommand{\kr}{{\mathcal K}}
\newcommand{\cc}{{\mathbb C}}
\newcommand{\rr}{{\mathbb R}}
\newcommand{\M}{{\mathcal M}}
\newcommand{\nn}{{\mathbb N}}
\newcommand{\eps}{{\varepsilon}}        
\newcommand{\A}{\mathcal A}
\newcommand{\D}{\mathcal D}
\newcommand{\cP}{\mathcal P}
\newcommand{\bS}{\mathbf S}
\newcommand{\bX}{\mathbf X}
\newcommand{\eins}{{\mathbbm{1}}}
\newtheorem{theorem}{Theorem}
\newtheorem{corollary}{Corollary}
\newtheorem{definition}{Definition}
\newtheorem{lemma}{Lemma}
\newtheorem{remark}{Remark}
\newcommand{\tr}{\mathrm{tr}}
\DeclareMathOperator{\conv}{conv}
\DeclareMathOperator{\linspan}{span}
\begin{document}

\title{Positivity, Discontinuity, Finite Resources and Nonzero Error for Arbitrarily Varying Quantum Channels}
\author{H. Boche, J. N\"otzel \\
\scriptsize{Electronic addresses: \{boche, janis.noetzel\}@tum.de}
\vspace{0.2cm}\\
{\footnotesize Lehrstuhl f\"ur Theoretische Informationstechnik, Technische Universit\"at M\"unchen,}\\
{\footnotesize 80290 M\"unchen, Germany}
}
\maketitle

\begin{abstract}
This work is motivated by a quite general question: Under which circumstances are the capacities of information transmission systems continuous? The research is explicitly carried out on arbitrarily varying quantum channels (AVQCs).\\
We give an explicit example that answers the recent question whether the transmission of messages over AVQCs can benefit from distribution of randomness between the legitimate sender and receiver in the affirmative.\\
The specific class of channels introduced in that example is then extended to show that the deterministic capacity does have discontinuity points, while that behaviour is, at the same time, not generic: We show that it is continuous around its positivity points. This is in stark contrast to the randomness-assisted capacity, which is always continuous in the channel. Our results imply that the deterministic message transmission capacity of an AVQC can be discontinuous only in points where it is zero, while the randomness assisted capacity is nonzero.\\
Apart from the zero-error capacities, this is the first result that shows a discontinuity of a capacity for a large class of quantum channels. The continuity of the respective capacity for memoryless quantum channels had, among others, been listed as an open problem on the problem page \cite{problempage} of the ITP Hannover for about six years before it was proven to be continuous.\\
We also quantify the interplay between the distribution of finite amounts of randomness between the legitimate sender and receiver, the (nonzero) decoding error with respect to the average error criterion that can be achieved over a finite number of channel uses and the number of messages that can be sent. This part of our results also applies to entanglement- and strong subspace transmission.\\
In addition, we give a new sufficient criterion for the entanglement transmission capacity with randomness assistance to vanish.
\end{abstract}
\begin{section}{Introduction}
A key property of a communication system is a continuous dependence of its performance on the system parameters. If small perturbations of the system lead to dramatic losses in the performance, it will most likely not be too widely used and instead be replaced by a more robust system.\\
The most fundamental property of a communication system is its capacity, and the very definition of capacities in quantum Shannon information theory is generically such that a straightforward application of Fekete's Lemma proves their existence as a real number.\\
One could now raise the question whether a similarly straightforward method exists that enables one to prove the continuity of channel capacities. The known methods are built upon the continuity of the entropy and require an explicit entropic formula for the capacity. Using them, one can prove \cite{leung-smith} that the capacities of memoryless quantum channels are continuous, and these methods carry over to the randomness-assisted capacities of AVQCs \cite{abbn}. At the same time, the zero-error capacities (see e.g. \cite{duan-severini-winter}) are certainly not continuous (this has e.g. been proven in \cite{abbn}), and there exists no explicit formula for computing them.\\
At this point, we observe that there is no known formula for the deterministic message transmission capacity of an AVQC, so we may rightfully ask whether it is continuous or not. Our investigation revealed that there do indeed exist discontinuity points for the unassisted (deterministic) message transmission capacity of an AVQC. This finding is intimately connected to the difference between randomness-assisted and unassisted message transmission capacity of AVQCs, a difference which has explicitly been conjectured in \cite{bn-correlation} and is proven in this paper.\\
The question whether a general method for proving continuity of a communication system exists remains open.\\
As an additional result we give bounds on the functional dependence between blocklength, error and the amount of common randomness needed to achieve that error.
\\\\
In the following, we will explain the model of an arbitrarily varying channel and provide examples for communication scenarios whose essential features are captured by the model. We then explain the effect of shared randomness for these systems and state a corresponding result. In close connection, we discuss the relevance of continuity of capacities, state results and give examples. Finally, we quantify the interplay between finite errors, block length and the amount of common randomness needed to achieve that error.
\\\\
Imagine a situation where a sender wants to transmit for example messages to a remote receiver. They each have access to a quantum system which is modeled on a finite dimensional Hilbert space and are connected by a quantum channel. Dependent on the message he wants to transmit the sender prepares some quantum state, which is then transmitted to the receiver over the channel. The question then is, whether the receiver can infer which message the sender intended to send just by performing measurements on the output states.\\
We assume that multiple channel uses are available and that the channel does not have a memory - but instead assume the existence of a jammer, which tries to prevent the two legal parties from communicating properly. Such a situation can arise e.g. in secret key distribution or transmission scenarios over quantum channels as developed by Devetak in \cite{devetak}, but when the evil third party is either not interested in or unable to do eavesdropping on the legal communication, but has some influence on the channel between the legal parties. The power of the jammer is, in the model chosen here, precisely quantified by his ability to influence the channel:\\
He is able to choose, for each of the multiple channel uses, one out of a fixed set $\fri$ of channels. This set is known to all three parties. The goal of sender and receiver is now to find encoding-and decoding procedures such that they can reliably transmit their data, no matter which choice the jammer makes. It can even be assumed that the jammer knows \emph{in advance} how the encoding-decoding procedure of sender and receiver works. This assumption will always be satisfied in commercial communication systems, where standardized protocols are being used. The model that we just introduced is called an arbitrarily varying quantum channel. Note that, throughout the entire manuscript, we restrict attention to finite AVQCs, e.g. those for which $|\fri|<\infty$ holds. The main reason for this is that it greatly simplifies proofs and puts a clean focus on the most relevant features of the systems under consideration.\\
Of course, the very same model can be formulated by using as the basic channels either classical, classical-quantum or quantum-classical channels, and the underlying systems that the three parties act upon could be described by any kind of physical theory. Another possible change in the model would be to enable the jammer to use quantum inputs to the system. In this work, we will stick to the model we described first.\\
The situation described by the model can, in these days, be found in denial-of-service attacks. It is important to note that the quality of an arbitrarily varying classical channel can not only be described by entropic quantities, as is the case for stationary memoryless channels. It has rather been found that its capability to transmit any messages at all is completely characterized by so-called \emph{symmetrizability conditions}.\\
Let us get into a bit more detail here. It has been proven, first in \cite{ahlswede-elimination} for classical arbitrarily varying channels, then in \cite{ahlswede-blinovsky} for classical-quantum arbitrarily varying channels that these systems exhibit a dichotomic behaviour: the message-transmission capacity under average error criterion, $\overline C_{\mathrm{det}}$, is either zero or equals an easily computable number, called the random capacity $\overline C_{\mathrm{random}}$. The latter quantity is the amount of messages that can be sent with transmission error approaching zero, when the number of channel uses goes to infinity and sender and receiver share a sufficiently large amount of shared randomness (polynomially much common randomness, in the number of channel uses, is sufficient). It turned out later \cite{csiszar-narayan,ericson} that those arbitrarily varying channels $\mathfrak W$ for which $\overline C_{\mathrm{det}}(\mathfrak W)=0$ holds are exactly characterized by so-called ``symmetrizability'' conditions.\\
The dichotomic behaviour has been proven to hold true for both entanglement and message transmission over AVQCs in \cite{abbn}. Another result of the work \cite{abbn} was that encoding-decoding schemes for entanglement transmission are also good for strong subspace transmission and vice versa. The later work \cite{bn-correlation} showed that this is also true for message transmission under average- and maximal error criterion. These results enable us to restrict our discussion to the average error criterion and entanglement transmission henceforth.\\
Despite these achievements, it remained an open question until now whether shared randomness really helps the transmission of messages over AVQCs, and the same question remained open for entanglement-and strong subspace transmission.\\
More precisely, it has been conjectured in \cite{abbn} that shared randomness does not increase the entanglement transmission capacity of AVQCs and in \cite{bn-correlation} that there exist examples of AVQCs $\fri$ for which $\overline C_{\mathrm{random}}(\fri)>0$ but $\overline C_{\mathrm{det}}(\fri)=0$ holds.\\
In this work we provide exactly such an example.\\
We then study the continuity properties of $\overline C_{\mathrm{det}}$ for AVQCs. We find that $\overline C_{\mathrm{det}}$ is continuous around every AVQC $\fri$ for which $\overline C_{\mathrm{det}}(\fri)>0$ holds. Put into simple words: If a system which is modeled as an AVQC is 'useful' in the sense that $\overline C_{\mathrm{det}}(\fri)>0$, then this remains true even if small errors are present in the evaluation of the system parameters.\\
An obvious question that comes with the above two results is, whether there really exist discontinuities for the function $\fri\mapsto\overline C_{\mathrm{det}}(\fri)$. The continuity of the message- and entanglement transmission capacity of a stationary memoryless quantum channel has been an open problem for quite a while, it was posed by M. Keyl and listed in the open problem page \cite{problempage} of R. Werner's group since 2003. After partial results, it was completely solved by Leung and Smith in \cite{leung-smith} in 2009, and answered in the affirmative: Both message- and entanglement transmission capacity are continuous for stationary memoryless quantum channels.\\
Quite on the contrary, we prove in this work that the message transmission capacity of AVQCs \emph{without} assistance by shared randomness is not continuous. We do so by explicit construction of an example. This is the first example of a discontinuous behaviour of a quantum capacity other than the zero-error capacities \cite{duan-severini-winter}.\\
Our previous results clearly demonstrate the importance of shared randomness for AVQCs. In \cite{abbn}, R. Ahlswede, I. Bjelakovi\c{c} and the authors showed that already a small amount of common randomness is sufficient to ensure that transmission of messages is possible at rates arbitrarily close to $\overline C_{\mathrm{random}}$. The same holds true for transmission of entanglement. Building on that and the work \cite{ahlswede-cai} of R. Ahlswede and N. Cai, the authors were able to to show in \cite{bn-correlation} that already the use of arbitrarily small amounts of correlation yield the same result.\\
This demonstrates that shared randomness has two important effects for AVQCs: First, it boosts the capacity to the maximally possible value, and second it stabilizes the system with respect to small changes (the capacity function with assistance by either unlimited shared randomness, positive correlation or small amounts of common randomness is always continuous).\\
This gives a strong motivation to start a closer investigation of the exact interplay between the system parameters, the error of message transmission at a specific block length and the amount of randomness used for stabilization of the system. This investigation is carried out in the last part of the paper. We give bounds on the number of shared secret bits (common randomness) $K$ needed to achieve some pre-given maximal error $\lambda$ within $L$ channel uses. Assuming that the AVQC under consideration has $|\bS|$ constituents, the scaling law is roughly $ K\leq\frac{\log|\bS|}{E\cdot\lambda}$, where $E$ is the reliability function of the \emph{compound channel} $\conv(\fri)$ and $L$ is more implicitly given through $E$, roughly scaling as $L(1-\frac{1}{L}\log L)\approx -E\log(E/\lambda)$. In case that the AVQC $\fri$ is symmetrizable, we note that the results of \cite{abbn} imply $\frac{1}{2\lambda}\leq K$, and for non-symmetrizable AVQCs we know that $K=0$ is sufficient by the quantum-Ahlswede dichotomy proven in \cite{abbn}.\\
Another important observation is that the number $K$ of random bits needed to guarantee a certain quality of transmission is essentially independent of the number $l$ of channel uses, if only $l\geq L$ holds, and is indefinite for $l< L$.\\
The technique of proof we utilize here applies to entanglement transmission as well.\\
It is clear that a similar result could be obtained by using only correlation to first establish enough common randomness and then use it with the above stated bounds, but the exact trade-off between $\lambda$, $L$ and the 'amount' of correlation remains unclear and we leave that question open for future work.
\\\\
A historical overview concerning the history of arbitrarily varying channels (in both the classical and the quantum case) can be found in \cite{abbn}. Among the more recent developments are \cite{abbn}, \cite{bbs}, \cite{wb} and \cite{bn-correlation}.
\end{section}
\begin{section}{\label{sec:Notation}Notation}
All Hilbert spaces are assumed to have finite dimension and are over the field $\cc$. The set of linear operators from $\hr$ to $\hr$ is denoted $\mathcal B(\hr)$. The adjoint of $b\in\mathcal B(\hr)$ is marked by a star and written $b^\ast$.\\
$\cs(\hr)$ is the set of states, i.e. positive semi-definite operators with trace (the trace function on $\mathcal B(\hr)$ is written $\tr$) $1$ acting on the Hilbert space $\hr$. The maximally mixed state with only one eigenvalue $\dim(\hr)$ in $\cs(\hr)$ is written $\pi_\hr$ or, if no confusion can arise, simply $\pi$. Pure states are given by projections onto one-dimensional subspaces. A vector $x\in\hr$ of unit length spanning such a subspace will therefore be referred to as a state vector, the corresponding state will be written
$|x\rangle\langle x|$. For a finite set $\mathbf X$ the notation $\mathfrak{P}(\mathbf X)$ is reserved for the set of probability distributions on $\mathbf X$, and
$|\mathbf X|$ denotes its cardinality. For any $l\in\nn$, we define $\bX^l:=\{(x_1,\ldots,x_l):x_i\in\bX\ \forall i\in\{1,\ldots,l\}\}$, we also write $x^l$ for the elements of $\bX^l$.\\
The set of completely positive trace preserving (CPTP) maps (also called quantum channels)
between the operator spaces $\mathcal{B}(\hr)$ and $\mathcal{B}(\kr)$ is denoted by $\mathcal{C}(\hr,\kr)$.\\
Closely related is the set of classical-quantum channels (abbreviated here using the term 'cq-channels') with finite input alphabet $\mathbf Z$ and output alphabet $\kr$, that arises from $\mathcal C(\hr,\kr)$ by setting $d=|\mathbf Z|$ and restricting the inputs to matrices that are diagonal in any specific basis. This set is denoted $CQ(\mathbf Z,\kr)$. Both $\mathcal C(\hr,\kr)$ and $CQ(\mathbf Z,\kr)$ are convex subsets of vector spaces.\\
For any natural number $N$, we define $[N]$ to be the shorthand for the set $\{1,...,N\}$.\\
Using the usual operator ordering symbols $\leq$ and $\geq$ on $\mathcal B(\hr)$, the set of measurements with $N\in\nn$ different outcomes is written
\begin{align}\M_N(\hr):=\{\mathbf D:\mathbf D=(D_1,\ldots,D_N)\ \wedge\ \sum_{i=1}^ND_i\leq\eins_\hr\ \wedge\ D_i\geq0\ \forall i\in[N]\}.\end{align}
To every $\mathbf D\in \M_N(\hr)$ there corresponds a unique operator defined by $D_0:=\eins_\hr-\sum_{i=1}^ND_i$. Throughout the paper, we will assume that $D_0=0$ holds. This is possible in our scenario, since adding the element $D_0$ to any of the other $D_1,\ldots,D_N$ does not decrease the performance of a given code.\\
The von Neumann entropy of a state $\rho\in\mathcal{S}(\hr)$ is given by
\begin{equation}S(\rho):=-\textrm{tr}(\rho \log\rho),\end{equation}
where $\log(\cdot)$ denotes the base two logarithm which is used throughout the paper.\\
The Holevo information is for a given channel $W \in CQ(\mathbf{X},\hr)$ and input probability distribution $p \in \mathfrak P(\mathbf{X})$ defined by
\begin{align}
 \chi(p, W) := S(\overline{W}) - \sum_{x \in \mathbf{X}} p(x) S(W(x)),
\end{align}
where $\overline{W}$ is defined by $\overline{W}:= \sum_{x \in \mathbf{X}} p(x) W(x)$\\
Given a bipartite random variable $(X,Y)$, its mutual information $I(X,Y)$ is given by $I(X,Y):=H(X)+H(Y)-H(X,Y)$, where $H(\cdot)$ is the usual Shannon entropy. If $p\in\mathfrak P(\bX)$ for some set $\bX$ satisfying $|\bX|=2$, its Shannon entropy is also written as $h(p(x)):=H(p)$, where $x\in\bX$ is arbitrary.\\
For $\rho\in\mathcal{S}(\hr)$ and $\cn\in
\mathcal{C}(\hr,\hr)$ the entanglement fidelity (which was defined in \cite{schumacher}) is given by
\begin{equation}F_e(\rho,\cn):=\langle\psi, (id_{\mathcal{B}(\hr)}\otimes \cn)(|\psi\rangle\langle \psi|)     \psi\rangle,  \end{equation}
with $\psi\in\hr\otimes \hr$ being an arbitrary purification of the state $\rho$.\\
For a finite set $\mathcal{W}=\{W_s\}_{s\in\bS} \subset \mathcal C(\hr,\kr)$ or $\mathcal{W}=\{W_s\}_{s\in\bS} \subset CQ(\mathbf Z,\kr)$ we denote its convex hull by $\conv(\mathcal{W})$. In the cases considered here the following will be sufficient. For a set $\mathcal{W}:= \{W_s\}_{s \in \bS}$
\begin{align}\label{eq:conv-hull}
 \conv(\mathcal{W})=\left\{W_{q}: W_q=\sum_{s\in \bS}q(s)W_s,\ q \in\mathfrak{P}(\bS)\right\}.
\end{align}
As a measure of distance between channels we use the diamond norm $||\cdot||_\lozenge$, which is given by
\[||\cn||_{\lozenge}:=\sup_{n\in \nn}\max_{a\in \mathcal{B}(\cc^n\otimes\hr),||a||_1=1}||(\textrm{id}_{n}\otimes \mathcal{N})(a)||_1,   \]
where $\textrm{id}_n:\mathcal{B}(\cc^n)\to \mathcal{B}(\cc^n)$ is the identity channel, and $\mathcal{N}:\mathcal{B}(\hr)\to \mathcal{B}(\kr)$ is any linear map, not necessarily completely positive. The merits of $||\cdot||_{\lozenge}$ are due to the following facts (cf. \cite{kitaev}). First, $||\cn||_{\lozenge}=1$ for all $\cn\in\mathcal{C}(\hr,\kr)$. Thus, $\mathcal{C}(\hr,\kr)\subset S_{\lozenge}$, where $S_{\lozenge}$ denotes the unit sphere of the normed space $(\mathcal{B}(\mathcal{B}(\hr),\mathcal{B}(\kr)),||\cdot||_{\lozenge} )$. Moreover, $||\cn_1\otimes \cn_2||_{\lozenge}=||\cn_1||_{\lozenge}||\cn_2||_{\lozenge}$ for arbitrary linear maps $\cn_1,\cn_2:\mathcal{B}(\hr)\to \mathcal{B}(\kr) $.\\
We further use the diamond norm to define the function $D_\lozenge(\cdot,\cdot)$ on $\{(\fri,\fri'):\fri,\fri'\subset\mathcal C(\hr,\kr)\}$, which is for $\fri,\fri'\subset\mathcal C(\hr,\kr)$ given by
$$D_\lozenge(\fri,\fri'):=\max\{\sup_{\cn\in\fri}\inf_{\cn'\in\fri'}||\cn-\cn'||_\lozenge,\sup_{\cn'\in\fri'}\inf_{\cn\in\fri}||\cn-\cn'||_\lozenge\}.$$
For $\fri\subset\mathcal C(\hr,\kr)$ let $\bar\fri$ denote the closure of $\fri$ in $||\cdot||_\lozenge$. Then $D_\lozenge$ defines a metric on $\{(\fri,\fri'):\fri,\fri'\subset\mathcal C(\hr,\kr),\ \fri=\bar\fri,\ \fri'=\bar{\fri'}\}$ which is basically the Hausdorff distance induced by the diamond norm.
\end{section}
\begin{section}{\label{sec:Definitions}Definitions}
For the rest of this subsection, let $\fri=\{\cn_s\}_{s\in\bS}\subset\mathcal C(\hr,\kr)$ denote a finite set of channels and $\hr,\kr$ some arbitrary but fixed finite dimensional Hilbert spaces over $\mathbb C$. Henceforth, we follow the convention from \cite{abbn}, using the term 'the AVQC $\fri$' as a linguistic shorthand for the mathematical object $(\{\cn_{s^l}\}_{s^l\in\bS^l})_{l\in\nn}$.\\
Due to the close correspondence between arbitrarily varying and certain compound channels, we will sometimes also encounter the case that $\fri$ stands for the compound channel $(\{\cn_{q}^{\otimes l}\}_{\cn_q\in\conv(\fri)})_{l\in\nn}$. In those cases, this will be explicitly mentioned.\underline{}
We will now define the entanglement transmission capacities of an AVQC. Corresponding coding theorems can be found in \cite{abbn}.
\begin{definition}
An $(l,k_l)-$\emph{random entanglement transmission code} for $\fri$ is a probability measure $\mu_l$ on $(\mathcal C(\fr_l,\hr^{\otimes l})\times\mathcal C(\kr^{\otimes l},\fr_l'),\sigma_l)$,
where $\fr_l,\fr_l'$ are Hilbert spaces, $\dim\fr_l=k_l$, $\fr_l\subset\fr_l'$ and the sigma-algebra $\sigma_l$ is chosen such that the function $(\cP_l,\crr_l)\mapsto F_e(\pi_{\fr_l},\crr_l\circ\cn_{s^l}\circ\cP_l)$ is measurable
w.r.t. $\sigma_l$ for every $s^l\in\bS^l$.\\
Moreover, we assume that $\sigma_l$ contains all singleton sets. An example of such a sigma-algebra $\sigma_l$ is given by
the product of sigma-algebras of Borel sets induced on $\mathcal C(\fr_l,\hr) $ and $\mathcal C(\kr,\fr_l') $ by the standard topologies of the ambient spaces.\\
The error of the code is given by $\eps_l:=1-\int d\mu_l(\crr^l,\cP^l)F_e(\pi_{\fr_l},\crr_l\circ\cn_{s^l}\circ\cP_l)$.
\end{definition}
\begin{definition}\label{def:random-cap-ent-trans}
A non-negative number $R$ is said to be an achievable entanglement transmission rate for the AVQC $\fri=\{\cn_s  \}_{s\in\bS}$ with random codes and error $\lambda\in[0,1]$ if there is a sequence of $(l,k_l)-$random entanglement transmission codes such that
\begin{enumerate}
\item $\liminf_{l\rightarrow\infty}\frac{1}{l}\log k_l\geq R$ and
\item $\liminf_{l\rightarrow\infty}\inf_{s^l\in\bS^l}\int F_e(\pi_{\fr_l},\crr^l\circ\cn_{s^l}\circ\cP^l)d\mu_l(\cP^l,\crr^l)\geq1-\lambda$.
\end{enumerate}
The random entanglement transmission capacity $\A_{\textup{random}}(\fri,\lambda)$ of $\fri$ with error $\lambda$ is defined by
\begin{equation}
\A_{\textup{random}}(\fri,\lambda):=\sup\left\{R\in\rr_+:\begin{array}{l}R \textrm{ is an achievable entanglement trans-}\\ \textrm{mission rate for } \fri \textrm{ with random codes and error }\lambda\end{array}\right\}.
\end{equation}
\end{definition}
\begin{remark}
The reader with some previous knowledge about the topic of finite errors will notice that this definition differs from the classical one used e.g. in \cite{ahlswede-wolfowitz}. Precisely speaking, if one would define capacities with finite errors in the spirit of \cite{ahlswede-wolfowitz} and use the symbol $\tilde\A_{\mathrm{random}}$ for those, then one would set $\tilde\A_{\mathrm{random}}:=\lim_{n\to\infty}\frac{1}{n}\log\max\{k_l:\exists\ (l,k_l)-\mathrm{code\ for\ entanglement\ transmission\ with\ }\lambda_l\leq\lambda\}$. Since $\lambda\to\tilde\A_{\mathrm{random}}(\fri,\lambda)$ is monotone increasing on $[0,1]$, the limits $\lim_{\eps\to0}\tilde\A_{\mathrm{random}}(\fri,\lambda+\eps)$ exist for every $\lambda\in[0,1)$. It is then clear that the equality $\A_{\mathrm{random}}(\fri,\lambda)=\lim_{\eps\to0}\tilde\A_{\mathrm{random}}(\fri,\lambda+\eps)$ holds for all $\lambda\in[0,1)$. We thus see that $\A_{\mathrm{random}}$ is simply the right-regularized version of $\tilde\A_{\mathrm{random}}$.\\
While $\tilde A_{\mathrm{random}}$ might be a practically more relevant definition, it is clear that the two definitions can lead to a different value in capacity only at discontinuity points of $\tilde A_{\mathrm{random}}$. Since both functions are monotone increasing on the interval $[0,1]$, the number of such points is countable by \cite{rudin-principles-of-mathematical-analysis}, Theorem 4.30.\\
Notably, at $\lambda=0$, one gets $\A_{\mathrm{random}}(\fri,0)=\A_{\mathrm{random}}(\fri)$ for 'the' random capacity $\A_{\mathrm{random}}$ of an AVQC according to Definition 2 in \cite{abbn}, while $\tilde\A_{\mathrm{random}}(\fri,0)$ gives the randomness-assisted zero-error capacity of an AVQC.\\
This latter point makes our definition fit seamlessly with the previous work \cite{abbn,bn-correlation} on AVQCs. At the same time, we do not encounter a dramatically different behaviour in most cases. The same reasoning applies to all the other capacities defined in this paper.
\end{remark}
Having defined random codes and random code capacity for entanglement transmission we are in the position to introduce their deterministic counterparts: An $(l,k_l)-$code for entanglement transmission over $\fri$ is an $(l,k_l)-$random code for $\fri$ with $\mu_l(\{(\mathcal{P}^l,\crr^l)  \}  )=1$ for some encoder-decoder pair $(\mathcal{P}^l,\crr^l)$ (This explains our requirement on $\sigma_l$ to contain all singleton sets) and $\mu_l(A)=0$ for any $A\in\sigma_l$ with $(\mathcal{P}^l,\crr^l)\notin A $. We will refer to such measures as point measures in what follows.
\begin{definition}
A non-negative number $R$ is a deterministically achievable entanglement transmission rate for the AVQC $\fri=\{\cn_s  \}_{s\in \bS}$ with error $\lambda\in[0,1]$ if it is achievable in the sense of Definition \ref{def:random-cap-ent-trans} for random codes  with \emph{point measures} $\mu_l$ and error $\lambda$.\\
The deterministic entanglement transmission capacity $\A_{\textup{det}}(\fri,\lambda)$ of $\fri$ with error $\lambda$ is given by
\begin{equation}
\A_{\textup{det}}(\fri,\lambda):=\sup\left\{R\in\rr_+:\begin{array}{l}R \textrm{ is an achievable entanglement trans-}\\ \textrm{mission rate for } \fri \textrm{ with deterministic codes and error }\lambda\end{array}\right\}.
 \end{equation}
\end{definition}
We now switch attention to message transmission. From the results in \cite{bn-correlation}, we know that average error criterion and maximal error criterion lead to the same capacity for AVQCs. Strictly speaking, this is a consequence of two facts: First, it does not really make sense to restrict the encoding functions to pure signal states in the quantum case, and second the two criteria are equivalent in the classical case as well, if one allows randomized encodings (see Ahlswede's paper \cite{ahlswede-elimination}, theorems 2 and 3).
\begin{definition}[Codes for message transmission over an AVQC]\label{def:message-trans-with-average-error}
Let $l\in\nn$. A random code for message transmission over $\fri$ is given by a probability measure $\gamma_l$ on the set $(CQ(M_l,\hr^{\otimes l})\times\mathcal M_{M_l},\Sigma_l)$, where $\Sigma_l$ again denotes a $\sigma-$algebra containing all singleton sets. Such $\sigma$-algebras exist, by arguments similar to those given in \cite{abbn} and\cite{bbjn}. A deterministic code is then given by a random code $\gamma_l$, where $\gamma_l$ is a point (also called Dirac) measure.
\end{definition}
\begin{definition}[Achievability]
A nonnegative number $R$ is called achievable with random codes with error $\lambda$ under the average error criterion if there exists a sequence $(\gamma_l)_{l\in\nn}$ of random codes satisfying both
\begin{align}
1)&\qquad\liminf_{l\to\infty}\min_{s^l\in\bS^l}\int\frac{1}{M_l}\sum_{i=1}^{M_l}\tr\{D_i\cn_{s^l}(\cP(i))\}d\gamma_l(\cP,\mathbf D)\geq1-\lambda\\
2)&\qquad\limsup_{l\to\infty}\frac{1}{l}\log M_l\geq R.
\end{align}
If the sequence $(\gamma_l)_{l\in\nn}$ can be chosen to consist of point measures only, then $R$ is called achievable with deterministic codes under the average error criterion.
\end{definition}
\begin{definition}[Message transmission capacities of an AVQC]
The corresponding capacities of $\fri$ are defined as
\begin{align}
\overline C_{\mathrm{det}}(\fri,\lambda)&:=\sup\left\{R:
\begin{array}{l}
R\ \mathrm{is\ achievable\ with\ deterministic\ codes}\\
\mathrm{under\ the\ average\ error\ criterion\ with\ error\ }\lambda
\end{array}\right\},\\
\overline C_{\mathrm{random}}(\fri,0)&:=\sup\left\{R:
\begin{array}{l}
R\ \mathrm{is\ achievable\ with\ random\ codes}\\
\mathrm{under\ the\ average\ error\ criterion\ with\ error\ }\lambda
\end{array}\right\}.
\end{align}
\end{definition}
The above definitions enable us to restate the main result of \cite{abbn} which connects the entanglement transmission capacities via the deterministic message transmission capacity $\overline C_{\mathrm{det}}(\cdot,0)$.
\begin{theorem}[Quantum Ahlswede dichotomy, cf. Theorem 1 in \cite{abbn}]\label{quant-ahlswede-dichotomy}
Let $\fri=\{\cn_s  \}_{s\in \bS}$ be a finite AVQC.
\begin{enumerate}
\item For $conv(\fri)$ given in (\ref{eq:conv-hull}) we have
  \begin{equation}\label{eq:ahlswede-dichotomy-1}
    \A_{\mathrm{random}}(\fri,0)=\lim_{l\to\infty}\frac{1}{l}\max_{\rho\in\cs(\hr^{\otimes l})}\inf_{\cn\in conv(\fri)}I_c(\rho, \cn^{\otimes l}).
  \end{equation}
\item Either $\overline C_{\mathrm{det}}(\fri,0)=0 $ or else $\A_{\mathrm{det}}(\fri,0)= \A_{\mathrm{random}}(\fri,0)$.
\end{enumerate}
\end{theorem}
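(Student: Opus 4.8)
The plan is to establish the two parts by different machinery: Part 1 is a coding theorem (direct part together with a converse) identifying the shared-randomness capacity with that of the compound channel generated by $\conv(\fri)$, whereas Part 2 is a derandomization statement in the spirit of Ahlswede's \emph{elimination technique}. The organizing principle throughout is that shared randomness reduces the arbitrarily varying channel to the compound channel built from the convex hull, and that a positive deterministic message-transmission rate is precisely what is needed to generate, at asymptotically negligible cost, the randomness consumed by that reduction. The reverse inequality $\A_{\mathrm{det}}(\fri,0)\leq\A_{\mathrm{random}}(\fri,0)$ is trivial, since point measures are a special case of the probability measures $\mu_l$.

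For Part 1 I would first prove the converse. A fixed $(l,k_l)$-random code feeds a fixed state $\rho_l\in\cs(\hr^{\otimes l})$ into the channel, independent of the jammer's choice, and must perform well against every $s^l\in\bS^l$; in particular it is a good code for each memoryless channel $\cn^{\otimes l}$ with $\cn\in\conv(\fri)$. The single-channel converse for entanglement transmission, via continuity of $F_e$ and Fannes-type estimates, bounds $\tfrac1l\log k_l$ by $\tfrac1l I_c(\rho_l,\cn^{\otimes l})$ up to vanishing terms; since the code must defeat the worst channel, taking the infimum over $\cn\in\conv(\fri)$ and then the maximum over the code's input state yields the bound $\tfrac1l\max_\rho\inf_{\cn\in\conv(\fri)}I_c(\rho,\cn^{\otimes l})$. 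For the direct part I would invoke the universal (compound) channel coding theorem, which for each large $l$ produces a single deterministic code of rate close to that same quantity, good simultaneously for all $\cn\in\conv(\fri)$. Ahlswede's robustification technique then upgrades this compound-good code to a random code for the AVQC: averaging over a uniformly random permutation of the $l$ channel uses (the permutation being the shared randomness) converts ``good on average against all i.i.d.\ jammer distributions'' into ``good against every individual sequence $s^l$'', at the price of a factor polynomial in $l$ in the error, which is harmless after taking logarithms and dividing by $l$.

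For Part 2 I would assume $\overline C_{\mathrm{det}}(\fri,0)>0$ and show $\A_{\mathrm{det}}(\fri,0)\geq\A_{\mathrm{random}}(\fri,0)$. Fix $R<\A_{\mathrm{random}}(\fri,0)$ and take a sequence of random codes $\mu_l$ of rate $\to R$ and error $\to0$. The heart of the argument is \emph{elimination}: I would draw $N_l$ encoder--decoder pairs independently from $\mu_l$ and show that, with positive probability, the uniform mixture over these $N_l$ pairs still has error uniformly small over all $s^l$. Since the number of jammer sequences is only $|\bS|^l$, while the empirical mean of the $[0,1]$-valued variables $F_e(\pi_{\fr_l},\crr_l\circ\cn_{s^l}\circ\cP_l)$ concentrates around its expectation at a rate exponential in $N_l$ (a scalar Chernoff/Hoeffding bound), a union bound shows that $N_l$ growing merely linearly in $l$ suffices. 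The random code is thereby replaced by a \emph{uniform} mixture of $N_l=O(l)$ deterministic codes, so that only $\log N_l=O(\log l)$ bits of shared randomness are actually consumed. These bits are now produced deterministically: because $\overline C_{\mathrm{det}}(\fri,0)>0$, a prefix of $O(\log l)=o(l)$ channel uses transmits the index of the selected code with error $\to0$. Concatenating this prefix with the selected entanglement code gives a genuine deterministic code of rate $\to R$ and total error $\to0$, so $R$ is deterministically achievable.

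The main obstacle is the elimination step, specifically verifying that the entanglement fidelities concentrate uniformly over the exponentially many sequences $s^l$ while $N_l$ stays polynomial in $l$; this is exactly where the finiteness $|\bS|<\infty$ is essential, since it caps the union bound at $|\bS|^l$ and lets the doubly-exponential decay win. A secondary point demanding care is the bookkeeping of error criteria: one must ensure that the prefix's vanishing message-transmission error, the loss incurred in passing to a finite mixture, and the original code's vanishing entanglement error all sum to something still tending to $0$, so that the constructed code genuinely witnesses achievability for the $\lambda=0$ capacity $\A_{\mathrm{det}}(\fri,0)$. The compound-channel direct part and the robustification estimate I would import as cited results rather than reprove.
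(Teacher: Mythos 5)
Your proposal is correct and follows essentially the same route as the proof this theorem rests on: the paper does not reprove the statement but imports it from \cite{abbn}, where Part 1 is obtained exactly as you describe (universal coding for the compound channel $\conv(\fri)$, upgraded by Ahlswede's robustification over random permutations, with the converse reducing the random code to the memoryless channels $\cn^{\otimes l}$, $\cn\in\conv(\fri)$), and Part 2 via elimination to a uniform mixture of $O(l)$ deterministic codes, a union bound over the $|\bS|^l$ jammer sequences against the exponential concentration, and transmission of the code index over an $o(l)$ prefix made possible by $\overline C_{\mathrm{det}}(\fri,0)>0$. The two points you flag as delicate --- the finiteness of $\bS$ capping the union bound, and the bookkeeping that the prefix error, mixture loss, and original error jointly vanish --- are precisely the steps carried out in \cite{abbn}, so nothing in your sketch diverges from the cited argument.
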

It was also proven in \cite{abbn} that $\overline C_{\mathrm{det}}(\fri,0)=0$ holds if and only if the AVQC is \emph{symmetrizable} according to definition 39 in \cite{abbn}. This definition reads as follows:
\begin{definition}[Cf. definition 39 in \cite{abbn}]\label{def:c-symmetrizability}
Let $\bS$ be a finite set and $\fri=\{\cn_s  \}_{s\in\bS}$ an AVQC.
\begin{enumerate}
\item $\fri$ is called $l$-symmetrizable, $l\in\nn$, if for each finite set $\{\rho_1,\ldots,\rho_K  \}\subset \cs(\hr^{\otimes l})$, $K\in \nn$, there is a map $p:\{\rho_1,\ldots, \rho_K  \}\to \mathfrak{P}(\bS^l)$ such that for all $i,j\in\{1,\ldots, K  \}$
\begin{equation}\label{eq:c-symmetrizable}
\sum_{s^l\in\bS^l}p(\rho_i)(s^l)\cn_{s^l}(\rho_j)= \sum_{s^l\in\bS^l}p(\rho_j)(s^l)\cn_{s^l}(\rho_i)
\end{equation}
holds.
\item We call $\fri$ symmetrizable if it is $l$-symmetrizable for all $l\in\nn$.
\end{enumerate}
\end{definition}
As mentioned already, every AVQC $\fri$ is intimately connected to the compound quantum channel $\conv(\fri)$: the capacities of the AVQC $\fri$ are often given by the respective formulas for the corresponding compound quantum channels $\conv(\fri)$. This connection especially shows up in the proof and formulation of our theorem \ref{thm:random-code-reduction}, where we encounter the reliability functions of compound quantum channels. In order to define these, we first define codes, achievability and corresponding capacities for compound quantum channels:
\begin{definition}[Codes, achievability and capacities for compound quantum channels]\label{def:message-trans-for-compound-channel}
Let $l\in\nn$. A code $\mathfrak C_l$ for message transmission over the compound channel $\fri$ is given by a natural number $M_l$, an encoding $\cP:[M_l]\to\cs(\hr^{\otimes l})$ and a decoding $\mathbf D\in\mathcal M_{M_l}$. The error $\eps_l$ associated to the code is given by
\begin{align}
\eps_l:=1-\min_{s\in\bS}\frac{1}{M_l}\sum_{i=1}^{M_l}\tr\{D_i\cn_{s}^{\otimes l}\}.
\end{align}
A nonnegative number $R$ is called achievable for the compound channel $\fri$ with error $\lambda\in[0,1]$ if there exists a sequence $(\mathfrak C_l)_{l\in\nn}$ of codes for $\fri$ satisfying both $\limsup_{l\to\infty}\eps_l\leq\lambda$ and $\limsup_{l\to\infty}\frac{1}{l}\log M_l\geq R$.\\\\
The capacity of the compound channel $\fri$ with error $\lambda$ is defined as
\begin{align}
\overline C_{\mathrm{det}}^{\mathrm{comp}}(\fri,\lambda)&:=\sup\left\{R:
\begin{array}{l}
R\ \mathrm{is\ an\ achievable\ rate\ for\ the\ compound\ }\\
\mathrm{channel\ }\fri\ \mathrm{under\ the\ average\ error\ criterion}\textrm{\ and\ with\ error\ }\lambda
\end{array}\right\}.
\end{align}
By changing the reliability criterion from average to maximal error, we can define the corresponding capacity $C_{\mathrm{det}}^{\mathrm{comp}}$, and switching to entanglement- or strong subspace transmission defines the capacities $Q$ (see \cite{bbn-2}) and $Q_s$ in the obvious way.
\end{definition}
These definitions enable us now to define the corresponding reliability functions:
\begin{definition}[Reliability Functions] The reliability functions $\overline E_m,E_m,E_e,E_{s}:\mathcal C(\hr,\kr)\times\mathbb R_+\to\mathbb R_+$ are defined by
\begin{align}
\overline E_{m}(\fri,R)&:=\sup\left\{E:
\begin{array}{l}
\mathrm{There\ is\ }\eps>0\ \mathrm{and\ } N\in\nn\ \mathrm{such\ that}\ \mathrm{for\ all\ } l\geq N\ \mathrm{there\ is\ a\ code\ for}\\
\mathrm{message} \mathrm{\ transmission\ over\ the\ compound\ quantum\ channel\ }\fri\mathrm{\ satisfying}\\
\frac{1}{l}\log(M_l)\geq R-\eps\ \mathrm{and}\ \eps_l\leq2^{-l(E-\eps)},\ \mathrm{with}\ \eps_l\ \mathrm{being\ the\ average\ error.}
\end{array}\right\}\\
E_{m}(\fri,R)&:=\sup\left\{E:
\begin{array}{l}
\mathrm{There\ is\ }\eps>0\ \mathrm{and\ } N\in\nn\ \mathrm{such\ that}\ \mathrm{for\ all\ } l\geq N\ \mathrm{there\ is\ a\ code\ for}\\
\mathrm{message\ transmission\ over\ the\ compound\ quantum\ channel\ }\fri\mathrm{\ satisfying}\\
\frac{1}{l}\log(M_l)\geq R-\eps\ \mathrm{and}\ \eps_l\leq2^{-l(E-\eps)},\ \mathrm{with}\ \eps_l\ \mathrm{being\ the\ maximal\ error.}
\end{array}\right\}\\
E_{e}(\fri,R)&:=\sup\left\{E:
\begin{array}{l}
\mathrm{There\ is\ }\eps>0\ \mathrm{and\ } N\in\nn\ \mathrm{such\ that}\ \mathrm{for\ all\ } l\geq N\ \mathrm{there\ is\ a\ code\ for}\\
\mathrm{entanglement} \mathrm{\ transmission\ over\ the\ compound\ quantum\ channel\ }\fri\\
\mathrm{\ satisfying}\ \frac{1}{l}\log(k_l)\geq R-\eps\ \mathrm{and\ error}\ \eps_l\leq2^{-l(E-\eps)}.
\end{array}\right\}\\
E_{s}(\fri,R)&:=\sup\left\{E:
\begin{array}{l}
\mathrm{There\ is\ }\eps>0\ \mathrm{and\ } N\in\nn\ \mathrm{such\ that}\ \mathrm{for\ all\ } l\geq N\ \mathrm{there\ is\ a\ code\ for}\\
\mathrm{strong\ subspace\ transmission\ over\ the\ compound\ quantum\ channel\ }\fri\\
\mathrm{\ satisfying}\ \frac{1}{l}\log(k_l)\geq R-\eps\ \mathrm{and\ error}\ \eps_l\leq2^{-l(E-\eps)}.
\end{array}\right\}
\end{align}
\end{definition}
That above defined functions actually yield nonzero, finite values is not trivial in itself. It can, however, be explicitly read off the results in \cite{bbn-2} that $E_e(\fri,R)>0$ holds if $R< Q(\fri)$ (where $Q$ denotes the entanglement transmission capacity of the compound channel $\fri$ in the work \cite{bbn-2}), and that $E_m(\fri,R)$ can be larger than zero follows from \cite{bb-compound}.
\end{section}
\begin{section}{\label{sec:main-results}Main Results}
We now list our main results. If not specified otherwise, $\fri$ denotes a finite AVQC throughout the entire section.
\begin{theorem}\label{thm:CdetneqCrand}
Let $\fri$ consist of entanglement breaking channels that have the special form $\cn_s(\rho):=\sum_{x\in\bX}\tr\{\rho M_x\}\rho_{s,x}$, $s\in\bS$, for some finite set $\bS$ and POVM $\{M_i\}_{i=1}^M$ on $\hr$. The following two statements are true:
\begin{enumerate}
\item If there are probability distributions $\{p_x\}_{x\in\bX}\subset\mathfrak P(\bS)$ such that
\begin{align}
\sum_{s\in\bS}p_{x'}(s)\rho_{s,x}=\sum_{s\in\bS}p_x(s)\rho_{s,x'}\qquad\forall x,x'\in\bX,\label{eqn:example-is-symmetrizable}
\end{align}
then it holds $\overline C_{\mathrm{det}}(\fri,0)=0$.
\item There exists an example of an AVQC satisfying the above conditions which additionally has the property $\overline C_{\mathrm{random}}(\fri,0)>0$.
\end{enumerate}
\end{theorem}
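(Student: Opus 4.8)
The plan is to read both statements off the symmetrizability criterion of Definition \ref{def:c-symmetrizability}, combined with the fact recalled right after Theorem \ref{quant-ahlswede-dichotomy} that $\overline C_{\mathrm{det}}(\fri,0)=0$ holds precisely when $\fri$ is symmetrizable. For the first statement I would show that condition (\ref{eqn:example-is-symmetrizable}) forces $\fri$ to be $l$-symmetrizable for every $l$. The point is that each $\cn_s$ sees its input only through the numbers $\tr\{\rho M_x\}$, so on $\hr^{\otimes l}$ the block channel reads $\cn_{s^l}(\rho)=\sum_{x^l\in\bX^l}\tr\{\rho M_{x^l}\}\rho_{s^l,x^l}$ with $M_{x^l}=M_{x_1}\otimes\cdots\otimes M_{x_l}$ and $\rho_{s^l,x^l}=\rho_{s_1,x_1}\otimes\cdots\otimes\rho_{s_l,x_l}$. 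For arbitrary $\rho\in\cs(\hr^{\otimes l})$ I would define the jammer law
\[
q_\rho(s^l):=\sum_{x^l\in\bX^l}\tr\{\rho M_{x^l}\}\,p_{x^l}(s^l),\qquad p_{x^l}(s^l):=\prod_{k=1}^l p_{x_k}(s_k),
\]
which lies in $\mathfrak P(\bS^l)$ as a convex combination of the product distributions $p_{x^l}$. Expanding $\sum_{s^l}q_{\rho_i}(s^l)\cn_{s^l}(\rho_j)$ bilinearly in the two families of measurement statistics collapses everything onto the coefficients $\sum_{s^l}p_{x^l}(s^l)\rho_{s^l,(x')^l}$, and taking tensor products of (\ref{eqn:example-is-symmetrizable}) coordinatewise shows these are symmetric under exchanging $x^l$ and $(x')^l$. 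Substituting this symmetry back yields exactly (\ref{eq:c-symmetrizable}) for the map $\rho\mapsto q_\rho$, proving $l$-symmetrizability for all $l$, hence $\overline C_{\mathrm{det}}(\fri,0)=0$.

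The only genuine obstacle in the first part is the passage from the single-letter identity to all block lengths: Definition \ref{def:c-symmetrizability} quantifies over all states of $\cs(\hr^{\otimes l})$, including entangled ones, and demands a distribution on $\bS^l$ rather than a product one. The resolution is the observation that the measure-and-prepare structure makes the output affine in the statistics $\tr\{\rho M_{x^l}\}$, so that product jammer laws $p_{x^l}$ already suffice and the whole verification reduces to the componentwise single-letter condition; entanglement of $\rho$ enters only through the scalars $\tr\{\rho M_{x^l}\}$ and is therefore harmless.

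For the second statement I would exhibit a concrete instance meeting (\ref{eqn:example-is-symmetrizable}) while keeping the compound channel $\conv(\fri)$ informative. Take $\hr=\cc^2$ with the projective POVM $M_0=|0\rangle\langle0|$, $M_1=|1\rangle\langle1|$, output space $\kr=\cc^3$, $\bS=\{0,1\}$, and diagonal output states realizing the shift rule $\rho_{s,x}=|1+x+s\rangle\langle1+x+s|$; that is $\rho_{0,0}=|1\rangle\langle1|$, $\rho_{1,0}=\rho_{0,1}=|2\rangle\langle2|$ and $\rho_{1,1}=|3\rangle\langle3|$. With the deterministic jammer choices $p_0=\delta_0$ and $p_1=\delta_1$ one checks (\ref{eqn:example-is-symmetrizable}) directly, so the first statement already gives $\overline C_{\mathrm{det}}(\fri,0)=0$. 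For positivity of the randomness-assisted capacity I would use that $\overline C_{\mathrm{random}}(\fri,0)$ coincides with the capacity of the compound channel $\conv(\fri)$ (cf.\ \cite{abbn}) and simply note that for every jammer mixture $q=(q_0,q_1)\in\mathfrak P(\bS)$ the two compound output states for inputs $0$ and $1$ have diagonals $(q_0,q_1,0)$ and $(0,q_0,q_1)$, which are distinct for all $q$ since $q_0+q_1=1$ rules out $q_0=q_1=0$. Thus no single jammer mixture confuses the two inputs, $\inf_{q}\chi$ stays strictly positive, and $\overline C_{\mathrm{random}}(\fri,0)>0$.

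The delicate point of the construction is that symmetrizability is exactly the obstruction to deterministic transmission, so the example must thread a narrow needle. In particular, for a binary input under a symmetric jammer law the choice $q_0=q_1=\tfrac12$ always collapses the two compound outputs, which would destroy the random capacity as well; the asymmetric adder structure above, in which the jammer effects an input-dependent shift yet can never simultaneously suppress the extreme output symbols $1$ and $3$, is what secures symmetrizability and a non-confusable compound channel at the same time. I expect verifying these two competing requirements to be the main, though ultimately routine, effort of the second part.
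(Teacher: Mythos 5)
Your proposal is correct and follows essentially the same route as the paper: the same measure-and-prepare example (yours is the paper's construction up to a relabeling of the output basis), the same induced jammer law $q_\rho(s)=\sum_{x}\tr\{M_x\rho\}p_x(s)$ verifying symmetrizability and hence $\overline C_{\mathrm{det}}(\fri,0)=0$ via the criterion from \cite{abbn}, and the same reduction of $\overline C_{\mathrm{random}}(\fri,0)$ to the compound channel $\conv(\fri)$ via \cite{ahlswede-blinovsky}. The only deviations are presentational: you carry out the lift from the single-letter identity to all block lengths explicitly (the paper proves $1$-symmetrizability and relegates the lift to a remark on the entanglement-breaking structure), and you settle positivity by a compactness/continuity argument giving $\inf_q\chi>0$, whereas the paper computes the explicit value $1-h(t)/2\geq 1/2$, which it later reuses for the bound $\geq 0.5$ in its discontinuity theorem.
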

\begin{remark}
Let us make a note on the intuition behind it. The channel $\fri$ is the concatenation of a stationary memoryless qc-channel (measurement) $\mathfrak W_1$ and an arbitrarily varying cq-channel $\mathfrak W_2$ given by the states $\{\rho_{s,x}\}_{s,x}$. This combination ensures that the channel itself is entanglement-breaking, whence its capacity has a one-shot formula and, even more important, it is $l$-symmetrizable for all $l\in\nn$ if and only if it is $1$-symmetrizable.\\
Using entangled inputs as signal states for $\fri$ results in mixtures of product states after the application of $\mathfrak W_1$, so $\mathfrak W_2$ sees a randomized code. But since we allow mixed inputs, this is equivalent to using just a randomized code with separable inputs for $\mathfrak W_2$. But on the subset of separable states signal states, $1-$symmetrizability is equivalent to $l$-symmetrizability for all $l\in\nn$, so no such code can transmit even a single bit with asymptotically vanishing error. Therefore, the deterministic capacity of $\mathfrak W$ has to be equal to zero.
\end{remark}
\begin{remark}
It is clear that the conjectured statement ``for all finite AVQCs, it holds that $\mathcal A_{\mathrm{det}}(\fri,0)=\mathcal A_{\mathrm{random}}(\fri,0)$'' is equivalent to saying that symmetrizability of a finite AVQC $\fri$ according to Definition 39 in \cite{abbn} implies that $\mathcal A_{\mathrm{random}}(\fri,0)=0$.\\
It is also clear that either one of the above would imply that $\mathcal A_{\mathrm{det}}$ is continuous, since $\mathcal A_{\mathrm{random}}$ is.
\end{remark}
\begin{theorem}\label{thm:Cdet-not-continuous}[Discontinuity of $\overline C_{\mathrm{det}}(\cdot,0)$]
The capacity function $\overline C_{\mathrm{det}}(\cdot,0):\mathcal C(\hr,\kr)\to\mathbb R_+$ is not continuous.\\
More precisely, let $\mathbb C^2=\linspan(\{e_1,e_2\})$ be naturally be embedded into $\mathbb C^3=\linspan(\{e_1,e_2,e_3\})$. Let the channel $\mathcal D_\eta\in\mathcal C(\mathbb C^2,\mathbb C^3)$ be defined through $\D_\eta(X):=(1-\eta)X+\eta\cdot\tr\{X\}\cdot\pi\ \forall X\in\mathcal B(\mathbb C^2)$. For any $\eta\in[0,1)$ the sequence $\fri_\lambda^\eta=\{\hat \cn_{s,\eta,\lambda}\}_{s\in\bS}$ of AVQCs defined by $\hat\cn_{s,\eta,\lambda}:=(1-\lambda)\mathcal D_\eta+\lambda\cn_s$ with $\{\cn_s\}_{s\in\bS}$ being the same set of channels as those constructed in Theorem \ref{thm:CdetneqCrand} satisfies
\begin{align}
\lim_{\lambda\to1}\overline C_{\mathrm{random}}(\fri_\lambda^\eta)=\overline C_{\mathrm{random}}(\fri,0)\geq0.5,\ \ \ \overline C_{\mathrm{det}}(\fri_\lambda^\eta,0)=\overline C_{\mathrm{random}}(\fri,0)\ \forall\lambda\in[0,1),\ \ \ \overline C_{\mathrm{det}}(\fri_1)=0.
\end{align}
On the other hand $\lim_{\lambda\to1}D_\lozenge(\fri_\lambda^\eta,\fri)=0$ for all $\eta\in[0,1]$, so that $\overline C_{\mathrm{det}}(\cdot,0)$ is discontinuous in the point $\fri_1$.
\end{theorem}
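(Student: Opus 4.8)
The plan is to read off the discontinuity from the symmetrizability status of $\fri_\lambda^\eta$ in the three regimes $\lambda=1$, $\lambda\in[0,1)$, and the limit $\lambda\to1$. The tools are the quantum Ahlswede dichotomy (Theorem \ref{quant-ahlswede-dichotomy}) together with the characterization from \cite{abbn} that $\overline C_{\mathrm{det}}(\cdot,0)=0$ is equivalent to symmetrizability; the continuity of $\overline C_{\mathrm{random}}$ in the diamond norm (cited from \cite{abbn}); and the facts, established in Theorem \ref{thm:CdetneqCrand}, that the base AVQC $\fri$ is symmetrizable yet has $\overline C_{\mathrm{random}}(\fri,0)\geq0.5$. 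The whole point is to show that $\fri_\lambda^\eta$ is \emph{not} symmetrizable for every $\lambda\in[0,1)$, so that its deterministic and random capacities coincide and stay bounded away from zero as $\lambda\to1$, whereas at $\lambda=1$ one lands back on the symmetrizable $\fri$ with vanishing deterministic capacity.

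Two steps are routine. First, the diamond-norm convergence: since $\hat\cn_{s,\eta,\lambda}-\cn_s=(1-\lambda)(\D_\eta-\cn_s)$ and every CPTP map has unit diamond norm, $\|\hat\cn_{s,\eta,\lambda}-\cn_s\|_\lozenge\leq2(1-\lambda)$; pairing $\hat\cn_{s,\eta,\lambda}$ with $\cn_s$ for each $s\in\bS$ bounds both suprema in $D_\lozenge(\fri_\lambda^\eta,\fri)$ by $2(1-\lambda)\to0$. Second, the endpoint: $\fri_1^\eta=\{\cn_s\}_{s\in\bS}=\fri$ satisfies \eqref{eqn:example-is-symmetrizable}, so Theorem \ref{thm:CdetneqCrand}(1) gives $\overline C_{\mathrm{det}}(\fri_1,0)=0$, while $\overline C_{\mathrm{random}}(\fri,0)\geq0.5$ for the example of Theorem \ref{thm:CdetneqCrand}(2).

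The heart of the argument is the non-symmetrizability of $\fri_\lambda^\eta$ for $\lambda,\eta\in[0,1)$, and it suffices to defeat $1$-symmetrizability. Writing out \eqref{eq:c-symmetrizable} for $\hat\cn_{s,\eta,\lambda}$ and using $\sum_s p(\rho_i)(s)=1$, the state-independent summand $\eta\pi$ cancels and the condition collapses to
\[(1-\lambda)(1-\eta)(\rho_j-\rho_i)=\lambda\Big(\sum_{s\in\bS}p(\rho_j)(s)\cn_s(\rho_i)-\sum_{s\in\bS}p(\rho_i)(s)\cn_s(\rho_j)\Big).\]
The crucial property of the construction is that each $\cn_s$ is measure--then--prepare, so $\cn_s(\rho)$ depends on $\rho$ only through the outcome probabilities $(\tr\{\rho M_x\})_{x\in\bX}$. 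Since the POVM $\{M_x\}$ on $\mathbb C^2$ of the example is not informationally complete, there are distinct states $\rho_1\neq\rho_2$ with $\tr\{\rho_1M_x\}=\tr\{\rho_2M_x\}$ for all $x$ (they differ by a coherence in the kernel of the measurement map), whence $\cn_s(\rho_1)=\cn_s(\rho_2)$ for every $s$. For the family $\{\rho_1,\rho_2\}$ the right-hand side above then lies in $\linspan\{\rho_{s,x}\}$, while the left-hand side is the nonzero multiple $(1-\lambda)(1-\eta)(\rho_2-\rho_1)$ of that coherence. Making this a contradiction amounts to exhibiting a functional $\tr\{A\,\cdot\,\}$ annihilating all $\rho_{s,x}$ but not $\rho_2-\rho_1$, i.e. to verifying $\rho_2-\rho_1\notin\linspan\{\rho_{s,x}\}$ from the explicit example (measurement in a fixed basis, output states avoiding the coherence direction that $\D_\eta$ preserves via its $(1-\eta)\rho$ part). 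This shows $\fri_\lambda^\eta$ is not $1$-symmetrizable, hence not symmetrizable.

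With non-symmetrizability in hand, Theorem \ref{quant-ahlswede-dichotomy} (and its message-transmission counterpart via \cite{abbn,bn-correlation}) yields $\overline C_{\mathrm{det}}(\fri_\lambda^\eta,0)=\overline C_{\mathrm{random}}(\fri_\lambda^\eta,0)$ for every $\lambda\in[0,1)$. Continuity of $\overline C_{\mathrm{random}}$ together with $D_\lozenge(\fri_\lambda^\eta,\fri)\to0$ gives $\lim_{\lambda\to1}\overline C_{\mathrm{random}}(\fri_\lambda^\eta)=\overline C_{\mathrm{random}}(\fri,0)\geq0.5$ (the first asserted identity), and hence $\overline C_{\mathrm{det}}(\fri_\lambda^\eta,0)\to\overline C_{\mathrm{random}}(\fri,0)\geq0.5$ whereas $\overline C_{\mathrm{det}}(\fri_1,0)=0$; this is exactly the claimed discontinuity at $\fri_1$. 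The sharper per-$\lambda$ identity $\overline C_{\mathrm{random}}(\fri_\lambda^\eta,0)=\overline C_{\mathrm{random}}(\fri,0)$ I would obtain by computing the compound-channel capacity of $\conv(\fri_\lambda^\eta)=(1-\lambda)\D_\eta+\lambda\,\conv(\fri)$ directly, checking that the admixed fixed channel $\D_\eta$ leaves the worst-case Holevo information of the example unchanged. I expect the non-symmetrizability step, and to a lesser extent this last explicit capacity computation, to be the main obstacles; the discontinuity itself, however, rests only on the limit furnished by continuity and is therefore independent of the exact value.
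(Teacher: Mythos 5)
Your proposal is correct and follows essentially the same route as the paper: there, too, everything reduces to non-symmetrizability of $\fri_\lambda^\eta$ for $\lambda,\eta\in[0,1)$ by cancelling the depolarizing part in the symmetrizability equation and noting that the outputs of the $\cn_s$ are diagonal in $\{e_1,e_2,e_3\}$ while $(1-\lambda)(1-\eta)(\rho-\sigma)$ can have a nonzero off-diagonal entry $\langle e_1,(\rho-\sigma)e_2\rangle$, after which the dichotomy and continuity of $\overline C_{\mathrm{random}}$ finish the argument. Your detour through a pair of states with identical measurement statistics is unnecessary (any $\rho,\sigma$ with $\langle e_1,(\rho-\sigma)e_2\rangle\neq 0$ works directly, since every $\cn_s(\rho)$ lies in $\linspan\{\rho_{s,x}\}$ regardless of the input) but harmless, and your observation that the per-$\lambda$ identity $\overline C_{\mathrm{det}}(\fri_\lambda^\eta,0)=\overline C_{\mathrm{random}}(\fri,0)$ requires an argument beyond the dichotomy is apt---the paper's own proof establishes only non-symmetrizability and leaves that step implicit.
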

\begin{remark}
This is a first example of discontinuous behaviour of a quantum capacity other than the zero-error capacities. It is not clear to the authors yet, whether similar results could be proven for purely classical systems.\\
The example also highlights the stabilizing effect that is achieved by distribution of shared randomness in a communication system.
\end{remark}
\begin{theorem}[Positivity of $\overline C_{\mathrm{det}}$ is stable\label{thm:continuity-at-positivity-points}]
Let $\fri$ be a finite AVQC satisfying $\overline C_{\mathrm{det}}(\fri,0)>0$. There exists $\delta_0>0$ such that for all finite AVQCs $\fri'$ satisfying $D_\lozenge(\fri,\fri')\leq\delta_0$ it holds $\overline C_{\mathrm{det}}(\fri')>0$.
\end{theorem}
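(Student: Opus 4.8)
The plan is to reduce the claim entirely to the equivalence recorded right after Theorem~\ref{quant-ahlswede-dichotomy}: for a finite AVQC one has $\overline C_{\mathrm{det}}(\fri,0)=0$ if and only if $\fri$ is symmetrizable in the sense of Definition~\ref{def:c-symmetrizability}. Thus $\overline C_{\mathrm{det}}(\fri,0)>0$ is equivalent to $\fri$ being non-symmetrizable, and the whole theorem turns into the purely structural assertion that \emph{non-symmetrizability is an open condition} for the metric $D_\lozenge$. The entire proof is then the verification of this openness, and neither the capacity formula nor the continuity of the random capacity is needed.

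First I would unwind the definition. Since symmetrizability means $l$-symmetrizability for \emph{every} $l\in\nn$, non-symmetrizability of $\fri$ furnishes one fixed $l_0\in\nn$ together with a finite witness set $\{\rho_1,\dots,\rho_K\}\subset\cs(\hr^{\otimes l_0})$ admitting no symmetrizing map. I would quantify this by the defect functional
\begin{align*}
g(\fri):=\min_{(p_1,\dots,p_K)\in\mathfrak P(\bS^{l_0})^K}\ \max_{i,j}\ \Bigl\|\sum_{s^{l_0}\in\bS^{l_0}}p_i(s^{l_0})\cn_{s^{l_0}}(\rho_j)-\sum_{s^{l_0}\in\bS^{l_0}}p_j(s^{l_0})\cn_{s^{l_0}}(\rho_i)\Bigr\|_1,
\end{align*}
whose minimum is attained because $\mathfrak P(\bS^{l_0})^K$ is compact and the objective continuous. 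That the witness states admit no symmetrizer is exactly the statement $g(\fri)=:c>0$, and this $c$ is the robustness margin I will exploit.

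The core step is a push-forward argument that simultaneously absorbs the fact that a competitor $\fri'$ may carry a completely different finite index set $\bS'$. From $D_\lozenge(\fri,\fri')\le\delta_0$ I use the bound $\sup_{\cn'\in\fri'}\inf_{\cn\in\fri}\|\cn'-\cn\|_\lozenge\le\delta_0$ to select, for each $s'\in\bS'$, an index $\tau(s')\in\bS$ with $\|\cn'_{s'}-\cn_{\tau(s')}\|_\lozenge\le\delta_0$. Suppose, for contradiction, that $\fri'$ is symmetrizable; then it admits a symmetrizer $(q_1,\dots,q_K)\in\mathfrak P(\bS'^{l_0})^K$ for the \emph{same} witness states. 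Pushing each $q_i$ forward along the coordinatewise map $\tau^{l_0}:\bS'^{l_0}\to\bS^{l_0}$ produces candidates $p_i:=(\tau^{l_0})_*q_i\in\mathfrak P(\bS^{l_0})$. Combining the telescoping estimate $\|\cn'_{s'^{l_0}}-\cn_{\tau^{l_0}(s'^{l_0})}\|_\lozenge\le l_0\delta_0$ (which follows from the tensor multiplicativity of $\|\cdot\|_\lozenge$ noted in Section~\ref{sec:Notation} and $\|\cn\|_\lozenge=1$ for channels) with $\|(\cn-\cn')(\rho)\|_1\le\|\cn-\cn'\|_\lozenge$ for states $\rho$, the exact symmetrizing identities for the $q_i$ degrade, via a triangle inequality, into $\max_{i,j}\|\cdots\|_1\le 2l_0\delta_0$ for the $p_i$. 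Hence $g(\fri)\le 2l_0\delta_0$.

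Choosing $\delta_0<c/(2l_0)$ then contradicts $g(\fri)=c$, so $\fri'$ cannot be symmetrizable; being a finite AVQC it therefore satisfies $\overline C_{\mathrm{det}}(\fri',0)>0$, which is the assertion. I expect the only genuinely delicate point to be the push-forward step: one must ensure that \emph{every} constituent of $\fri'$ — including any ``extra'' channels beyond those approximating $\fri$, which a larger jammer set could in principle use to manufacture symmetrizability — is controlled. This is precisely why I route the argument through the map $\tau:\bS'\to\bS$ instead of trying to match index sets directly, and why the relevant half of the Hausdorff-type distance $D_\lozenge$ is the one bounding $\sup_{\cn'\in\fri'}\inf_{\cn\in\fri}$. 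The tensor-power accumulation of error is only linear in $l_0$, so once $l_0$ is fixed by the witness it is harmless.
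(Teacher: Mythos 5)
Your proof is correct, and its engine is the same as the paper's: quantify the symmetrizability defect by a compactness-attained minimum, push a hypothetical symmetrizer of $\fri'$ forward along an index-matching map $\tau$ between the finite constituent sets, and use convexity of $\|\cdot\|_\lozenge$ plus the triangle inequality to show the defect of $\fri$ is at most $2\cdot(\text{matching error})$ plus the defect of $\fri'$, forcing a contradiction for $\delta_0$ small. The differences are nonetheless worth recording. The paper works exclusively at blocklength one: it introduces the pairwise functional $F_l(\fri)=\max_{\rho,\sigma}\min_{p,q}\|\sum_s[p(s)\cn_s(\rho)-q(s)\cn_s(\sigma)]\|_1$, asserts (as a remark, without proof) its equivalence with $l$-symmetrizability, and then proves $F_1(\fri)\leq 2\eta+F_1(\fri')$, concluding via $F_1(\fri')>0$. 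This written argument is pinned to the case where non-symmetrizability of $\fri$ is already witnessed at $l=1$; if $\fri$ were $1$-symmetrizable but failed $l$-symmetrizability only at some $l_0>1$ (which the hypothesis $\overline C_{\mathrm{det}}(\fri,0)>0$ does not exclude, since symmetrizability in Definition \ref{def:c-symmetrizability} quantifies over all $l$), the paper's choice $\eta=F_1(\fri)/4$ degenerates to zero. You instead take the witness $\{\rho_1,\dots,\rho_K\}\subset\cs(\hr^{\otimes l_0})$ at whatever blocklength $l_0$ non-symmetrizability provides, use the tuple-consistent defect $g$ that is literally the negation of Definition \ref{def:c-symmetrizability} (so no unproven equivalence with a pairwise functional is needed), and pay the telescoping price $\|\cn'_{s'^{l_0}}-\cn_{\tau^{l_0}(s'^{l_0})}\|_\lozenge\leq l_0\delta_0$, absorbed into $\delta_0<c/(2l_0)$; since $l_0$, the witness, and $c$ depend only on $\fri$, this is legitimate. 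Your version is therefore more general and in fact closes the $l_0>1$ gap in the paper's written proof; your further observation that only the half $\sup_{\cn'\in\fri'}\inf_{\cn\in\fri}\|\cn'-\cn\|_\lozenge\leq\delta_0$ of the Hausdorff-type distance is used (the paper introduces both matching maps $f$ and $g$, though its final estimate likewise only consumes one direction) is a small but genuine strengthening of the statement.
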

\begin{remark}
This result should be compared to the behaviour of the zero-error capacities, which are generically unstable (discontinuous at every point where they have a positive capacity). This comparison shows that communication systems exhibit a wide range of behaviour: Among them are continuous, discontinuous, stable and unstable ones.
\end{remark}
This theorem has the following two corollaries:
\begin{corollary}\label{cor:continuity-of-Cdet}
Let $\fri$ be a finite AVQC and $\overline C_{\mathrm{det}}(\fri,0)>0$. If $\lim_{l\to\infty}D_\lozenge(\fri,\fri_l)=0$ for some sequence $(\fri_l)_{l\in\nn}$ of finite AVQCs, then also $\lim_{l\to\infty}\overline C_{\mathrm{det}}(\fri_l)=\overline C_{\mathrm{det}}(\fri,0)$.\\
In other words, $\overline C_{\mathrm{det}}$ is lower-semi continuous, when restricted to (sequences of) finite AVQCs.
\end{corollary}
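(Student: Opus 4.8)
The plan is to deduce the corollary directly from the already-established stability of positivity (Theorem \ref{thm:continuity-at-positivity-points}), combined with the Quantum Ahlswede Dichotomy (Theorem \ref{quant-ahlswede-dichotomy}) and the known continuity of the randomness-assisted capacity $\overline C_{\mathrm{random}}(\cdot,0)$ in the channel. Since the deterministic capacity never exceeds the randomness-assisted one, I would first record the trivial bound $\overline C_{\mathrm{det}}(\fri_l)\leq\overline C_{\mathrm{random}}(\fri_l,0)$ for every $l$. Using $D_\lozenge(\fri,\fri_l)\to0$ together with continuity of $\overline C_{\mathrm{random}}(\cdot,0)$ in the $D_\lozenge$-metric, this yields $\limsup_{l\to\infty}\overline C_{\mathrm{det}}(\fri_l)\leq\overline C_{\mathrm{random}}(\fri,0)$. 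By the dichotomy applied to $\fri$ itself, the hypothesis $\overline C_{\mathrm{det}}(\fri,0)>0$ forces $\overline C_{\mathrm{det}}(\fri,0)=\overline C_{\mathrm{random}}(\fri,0)$, so the right-hand side is exactly $\overline C_{\mathrm{det}}(\fri,0)$.

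For the matching lower bound I would invoke Theorem \ref{thm:continuity-at-positivity-points}: there is $\delta_0>0$ such that $\overline C_{\mathrm{det}}(\fri')>0$ whenever $D_\lozenge(\fri,\fri')\leq\delta_0$. Since $D_\lozenge(\fri,\fri_l)\to0$, there is $N\in\nn$ with $D_\lozenge(\fri,\fri_l)\leq\delta_0$, hence $\overline C_{\mathrm{det}}(\fri_l)>0$, for all $l\geq N$. For each such $l$ the dichotomy now applies to $\fri_l$ in its nontrivial branch, giving $\overline C_{\mathrm{det}}(\fri_l)=\overline C_{\mathrm{random}}(\fri_l,0)$. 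Passing to the limit and using continuity of $\overline C_{\mathrm{random}}(\cdot,0)$ once more yields $\liminf_{l\to\infty}\overline C_{\mathrm{det}}(\fri_l)=\lim_{l\to\infty}\overline C_{\mathrm{random}}(\fri_l,0)=\overline C_{\mathrm{random}}(\fri,0)=\overline C_{\mathrm{det}}(\fri,0)$. Combining the two bounds shows that the limit exists and equals $\overline C_{\mathrm{det}}(\fri,0)$, which is the first assertion of the corollary.

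The lower-semicontinuity statement then follows by splitting on the value of $\overline C_{\mathrm{det}}$ at the limit point: where $\overline C_{\mathrm{det}}(\fri,0)>0$, the convergence just proven gives $\liminf_{l\to\infty}\overline C_{\mathrm{det}}(\fri_l)=\overline C_{\mathrm{det}}(\fri,0)$, while where $\overline C_{\mathrm{det}}(\fri,0)=0$ the inequality $\liminf_{l\to\infty}\overline C_{\mathrm{det}}(\fri_l)\geq0=\overline C_{\mathrm{det}}(\fri,0)$ is automatic from nonnegativity of the capacity. I do not expect a genuine analytic obstacle here, since all the real work is carried by Theorem \ref{thm:continuity-at-positivity-points}; the only points to be careful about are that the dichotomy is invoked in its message-transmission formulation (legitimate, as the dichotomic behaviour is known to hold for message transmission as well) and that continuity of $\overline C_{\mathrm{random}}(\cdot,0)$ is used in the full $D_\lozenge$-metric rather than only along the special construction of the earlier examples.
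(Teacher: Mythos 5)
Your proposal is correct and follows essentially the same route as the paper's own proof: stability of positivity (Theorem \ref{thm:continuity-at-positivity-points}) to get $\overline C_{\mathrm{det}}(\fri_l,0)>0$ for all large $l$, then the Ahlswede dichotomy to replace $\overline C_{\mathrm{det}}(\fri_l,0)$ by $\overline C_{\mathrm{random}}(\fri_l,0)$, and finally continuity of $\overline C_{\mathrm{random}}(\cdot,0)$ (via Lemma \ref{lemma:message-trans-AVQC} and the Leung--Smith technique) to pass to the limit. Your separate $\limsup$ upper bound is harmless but redundant, since the $\liminf$ argument already identifies the limit exactly, which is how the paper phrases it.
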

\begin{remark}
Of course, semi-continuity is a rather weak property (compared to continuity). In the history of quantum channel coding, lower semi-continuity of the entanglement transmission capacity of memoryless quantum channels has first been proven by \cite{keyl-werner-quantum-errors} in 2002. The first complete proof of continuity of the (entanglement transmission) capacity for a one-parameter family of quantum channels (erasure channels) was given in \cite{barnum-smolin-terhal} in 1998. Research on that line culminated in a proof by Leung and Smith \cite{leung-smith} in 2009, showing that the message- and entanglement transmission capacities of a memoryless quantum channel are continuous. Their result also holds in the presence of a wiretapper.\\
The results of Leung and Smith easily extend to the case of AVQCs: The capacity $\overline C_{\mathrm{random}}$, given by the formula in Lemma \ref{lemma:message-trans-AVQC}, is continuous. A method of proof can be picked up in \cite{abbn}, where a proof is given for the entanglement transmission capacity of an AVQC.
\end{remark}
\begin{corollary}\label{cor:continuity-of-Adet}
Let $\fri$ be a finite AVQC and $\overline C_{\mathrm{det}}(\fri,0)>0$. If $\lim_{l\to\infty}D_\lozenge(\fri,\fri_l)=0$ for some sequence $(\fri_l)_{l\in\nn}$ of finite AVQCs, then we have that
\begin{align}
\mathcal A_{\mathrm{det}}(\fri,0)=\mathcal A_{\mathrm{random}}(\fri,0)=\lim_{l\to\infty}\mathcal A_{\mathrm{det}}(\fri_l,0).
\end{align}
\end{corollary}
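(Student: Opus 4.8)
The plan is to deduce the corollary from the Quantum Ahlswede dichotomy (Theorem \ref{quant-ahlswede-dichotomy}), the stability of positivity (Theorem \ref{thm:continuity-at-positivity-points}), and the continuity of the randomness-assisted capacity. The two asserted equalities are handled separately: the left-hand one is a pointwise statement at $\fri$, while the right-hand one is the limiting statement along the approximating sequence.

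First I would settle the left-hand equality. Since $\overline C_{\mathrm{det}}(\fri,0)>0$ by hypothesis, the first alternative in part (2) of Theorem \ref{quant-ahlswede-dichotomy} is excluded, and we are therefore forced into the second alternative, namely $\mathcal A_{\mathrm{det}}(\fri,0)=\mathcal A_{\mathrm{random}}(\fri,0)$. Next I would propagate the positivity to the tail of the sequence $(\fri_l)_{l\in\nn}$. By Theorem \ref{thm:continuity-at-positivity-points} there is a $\delta_0>0$ such that $D_\lozenge(\fri,\fri')\leq\delta_0$ implies $\overline C_{\mathrm{det}}(\fri',0)>0$. As $\lim_{l\to\infty}D_\lozenge(\fri,\fri_l)=0$, we may pick $N\in\nn$ with $D_\lozenge(\fri,\fri_l)\leq\delta_0$ for every $l\geq N$, so that $\overline C_{\mathrm{det}}(\fri_l,0)>0$ for all $l\geq N$. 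Applying the dichotomy to each such $\fri_l$ then yields $\mathcal A_{\mathrm{det}}(\fri_l,0)=\mathcal A_{\mathrm{random}}(\fri_l,0)$ for all $l\geq N$.

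The remaining and genuinely substantial step is to show that $\fri'\mapsto\mathcal A_{\mathrm{random}}(\fri',0)$ is continuous with respect to the Hausdorff-type distance $D_\lozenge$, which would give $\lim_{l\to\infty}\mathcal A_{\mathrm{random}}(\fri_l,0)=\mathcal A_{\mathrm{random}}(\fri,0)$. Here I would use the regularized formula (\ref{eq:ahlswede-dichotomy-1}): a small $D_\lozenge$-perturbation of $\fri$ forces $\conv(\fri)$ and $\conv(\fri')$ to be close in the diamond-norm Hausdorff distance, and the inner quantity $\inf_{\cn\in\conv(\fri)}I_c(\rho,\cn^{\otimes l})$ is controlled by an Alicki--Fannes-type estimate. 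Crucially, telescoping the tensor factors bounds the trace-norm deviation of the output states by $l\cdot\|\cn-\cn'\|_\lozenge$, so the coherent-information deviation grows at most like $l$ times a modulus depending only on the output dimension; after division by $l$ in the formula this modulus becomes blocklength-independent, and the regularization limit inherits continuity. This uniform-in-$l$ continuity argument is exactly the one carried out for the entanglement transmission capacity of an AVQC in \cite{abbn}, so I would invoke that result rather than redo the estimate.

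Finally I would assemble the pieces. Since a limit is unaffected by the finitely many terms with $l<N$, the identities $\mathcal A_{\mathrm{det}}(\fri_l,0)=\mathcal A_{\mathrm{random}}(\fri_l,0)$ valid for $l\geq N$ combine with the continuity of $\mathcal A_{\mathrm{random}}$ to give $\lim_{l\to\infty}\mathcal A_{\mathrm{det}}(\fri_l,0)=\lim_{l\to\infty}\mathcal A_{\mathrm{random}}(\fri_l,0)=\mathcal A_{\mathrm{random}}(\fri,0)$, and the first step identifies this common value with $\mathcal A_{\mathrm{det}}(\fri,0)$, completing the chain of equalities. I expect the main obstacle to be the continuity of $\mathcal A_{\mathrm{random}}$, and specifically the verification that the continuity modulus for the regularized coherent information is uniform in the blocklength and uniform over the compound channel $\conv(\fri)$; everything else reduces to bookkeeping with the dichotomy and the positivity-stability theorem.
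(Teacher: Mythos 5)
Your proposal is correct and takes essentially the same route as the paper's own proof: the dichotomy of Theorem \ref{quant-ahlswede-dichotomy} settles the left-hand equality under the hypothesis $\overline C_{\mathrm{det}}(\fri,0)>0$, Theorem \ref{thm:continuity-at-positivity-points} propagates positivity to all but finitely many $\fri_l$ so the dichotomy applies along the tail, and the continuity of $\mathcal A_{\mathrm{random}}(\cdot,0)$ with respect to $D_\lozenge$ (taken from \cite{abbn}) supplies the limit. The paper likewise simply cites \cite{abbn} (its equations (98) and (99)) for that continuity rather than redoing the uniform-in-blocklength estimate you sketch.
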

Since by Theorem \ref{thm:Cdet-not-continuous} we know that $\overline C_{\mathrm{det}}(\cdot,0)$ is not continuous, it makes sense to characterize the points of discontinuity:
\begin{theorem}[Characterization of points of discontinuity]\label{thm:characterizatino-of-discontinuity-points}
Let $\fri$ be an AVQC. The capacity function $\overline C_{\mathrm{det}}(\cdot,0)$ is discontinuous in the point $\fri$ if and only if $\overline C_{\mathrm{det}}(\fri,0)=0$, $\overline C_{\mathrm{random}}(\fri,0)>0$ and for every $\eps>0$ there exists a finite AVQC $\tilde\fri$ such that $D_\lozenge(\fri,\tilde\fri)<\eps$ and $\overline C_{\mathrm{det}}(\tilde\fri,0)>0$.
\end{theorem}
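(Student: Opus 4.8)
The plan is to derive the characterization directly from three facts already at our disposal: the quantum Ahlswede dichotomy (Theorem \ref{quant-ahlswede-dichotomy}) in its message--transmission form, which forces $\overline C_{\mathrm{det}}(\fri',0)\in\{0,\overline C_{\mathrm{random}}(\fri',0)\}$ for every finite AVQC $\fri'$; the continuity of $\overline C_{\mathrm{random}}(\cdot,0)$ in the metric $D_\lozenge$ (noted in the remark following Corollary \ref{cor:continuity-of-Cdet}); and Corollary \ref{cor:continuity-of-Cdet}, which states that $\overline C_{\mathrm{det}}(\cdot,0)$ is continuous at every finite AVQC where it is strictly positive. Since the relevant domain carries the metric $D_\lozenge$, I would phrase (dis)continuity sequentially throughout.

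For the forward implication I would argue clause by clause. First, if $\overline C_{\mathrm{det}}(\fri,0)>0$ then Corollary \ref{cor:continuity-of-Cdet} gives continuity at $\fri$; hence discontinuity forces $\overline C_{\mathrm{det}}(\fri,0)=0$, the first clause. Next, discontinuity at $\fri$ together with $\overline C_{\mathrm{det}}(\fri,0)=0$ produces a sequence $(\fri_l)_{l\in\nn}$ of finite AVQCs with $D_\lozenge(\fri,\fri_l)\to0$ but $\overline C_{\mathrm{det}}(\fri_l,0)\not\to0$; passing to a subsequence I may assume $\overline C_{\mathrm{det}}(\fri_l,0)\geq c$ for some $c>0$. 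In particular $\overline C_{\mathrm{det}}(\fri_l,0)>0$ for all $l$, which yields the third clause. Finally, the dichotomy applied to each $\fri_l$ gives $\overline C_{\mathrm{det}}(\fri_l,0)=\overline C_{\mathrm{random}}(\fri_l,0)\geq c$, and continuity of $\overline C_{\mathrm{random}}$ forces $\overline C_{\mathrm{random}}(\fri,0)=\lim_l\overline C_{\mathrm{random}}(\fri_l,0)\geq c>0$, which is the second clause.

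For the reverse implication I would run the same chain in the opposite direction. Applying the third clause with $\eps=1/n$ produces a sequence $(\fri_n)_{n\in\nn}$ of finite AVQCs with $D_\lozenge(\fri,\fri_n)<1/n$ and $\overline C_{\mathrm{det}}(\fri_n,0)>0$. The dichotomy turns each positivity statement into the identity $\overline C_{\mathrm{det}}(\fri_n,0)=\overline C_{\mathrm{random}}(\fri_n,0)$, and continuity of $\overline C_{\mathrm{random}}$ gives $\lim_n\overline C_{\mathrm{det}}(\fri_n,0)=\overline C_{\mathrm{random}}(\fri,0)$. By the second clause this limit is strictly positive, whereas the first clause gives $\overline C_{\mathrm{det}}(\fri,0)=0$; since $D_\lozenge(\fri,\fri_n)\to0$, the function $\overline C_{\mathrm{det}}(\cdot,0)$ cannot be continuous at $\fri$.

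The argument is short precisely because the analytic content has been front-loaded into Theorem \ref{thm:continuity-at-positivity-points} and Corollary \ref{cor:continuity-of-Cdet}. The one point demanding care --- the main obstacle, such as it is --- is justifying that the dichotomy may be used in its message--transmission form $\overline C_{\mathrm{det}}\in\{0,\overline C_{\mathrm{random}}\}$ rather than the entanglement--transmission statement literally recorded in Theorem \ref{quant-ahlswede-dichotomy}; this rests on the code reductions between message and entanglement transmission established in \cite{abbn,bn-correlation} and on the equivalence of $\overline C_{\mathrm{det}}(\fri,0)=0$ with symmetrizability, so that positivity is genuinely the branch point of the dichotomy. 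One should likewise confirm that the continuity of $\overline C_{\mathrm{random}}(\cdot,0)$ holds for message transmission in the $D_\lozenge$-metric, which follows by transporting the Leung--Smith estimates exactly as in \cite{abbn}.
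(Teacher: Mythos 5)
Your proposal is correct and takes essentially the same route as the paper: Corollary \ref{cor:continuity-of-Cdet} forces $\overline C_{\mathrm{det}}(\fri,0)=0$ at any discontinuity point, the message-transmission form of the Ahlswede dichotomy combined with the $D_\lozenge$-continuity of $\overline C_{\mathrm{random}}(\cdot,0)$ yields the remaining two clauses, and your converse direction is essentially verbatim the paper's argument (nearby positive-capacity AVQCs have $\overline C_{\mathrm{det}}=\overline C_{\mathrm{random}}$ bounded away from zero while $\overline C_{\mathrm{det}}(\fri,0)=0$). If anything, your subsequence formulation of the forward direction is tidier than the paper's, which asserts $\overline C_{\mathrm{det}}(\fri',0)\geq\delta$ for \emph{all} $\fri'$ with $0<D_\lozenge(\fri',\fri)\leq\delta$ --- an over-strong quantifier (contradicted by the paper's own example, where symmetrizable AVQCs approach the discontinuity point) that your along-a-sequence version quietly repairs.
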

Our forthcoming Definition \ref{def:function-on-finite-AVQCs} in Subsection \ref{subsec:investigation-of-continuity} that is needed in the proof of Theorem  \ref{thm:continuity-at-positivity-points} motivated the formulation of the following observation:
\begin{lemma}\label{lemma:new-symmetrizability-condition}
Let $\fri=\{\cn_s\}_{s\in\bS}$ be a finite AVQC. If there exist $l\in\nn$ and functions $p,q:\cs(\hr^{\otimes l})\to\mathcal P(\bS^l)$ such that for all $\rho,\sigma\in\cs(\hr^{\otimes l})$ the equality
\begin{align}
\sum_{s^l\in\bS^l}p(\sigma)(s^l)\cn_{s^l}(\rho)=\sum_{s^l\in\bS^l}q(\rho)(s^l)\cn_{s^l}(\sigma)
\end{align}
holds, then $\conv(\{\cn_{s^l}\}_{s^l\in\bS^l})$ contains an entanglement breaking channel and, consequently, $\A_{\mathrm{rand}}(\fri,0)=0$.
\end{lemma}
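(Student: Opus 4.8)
The plan is to exhibit a concrete entanglement breaking channel inside $\conv(\{\cn_{s^l}\}_{s^l\in\bS^l})$ by freezing one argument of the symmetrizing identity, and then to invoke the Quantum Ahlswede dichotomy to kill the random capacity.

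First I would fix an arbitrary reference state $\sigma\in\cs(\hr^{\otimes l})$ (the maximally mixed state $\pi$ is a convenient choice) and set $r:=p(\sigma)\in\cP(\bS^l)$. This defines a genuine channel
$$\Phi:=\sum_{s^l\in\bS^l}r(s^l)\,\cn_{s^l}\in\conv(\{\cn_{s^l}\}_{s^l\in\bS^l}),$$
since it is a convex combination of CPTP maps. Specializing the hypothesis to the fixed second argument $\sigma$ gives, for every $\rho\in\cs(\hr^{\otimes l})$,
$$\Phi(\rho)=\sum_{s^l\in\bS^l}p(\sigma)(s^l)\,\cn_{s^l}(\rho)=\sum_{s^l\in\bS^l}q(\rho)(s^l)\,\omega_{s^l},\qquad \omega_{s^l}:=\cn_{s^l}(\sigma).$$
The decisive structural consequence is that the image of $\Phi$ is contained in the polytope $K:=\conv\{\omega_{s^l}:s^l\in\bS^l\}$ spanned by at most $|\bS|^l$ fixed output states, because $q(\rho)$ is a probability vector for every input $\rho$.

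The heart of the argument, and the step I expect to be the main obstacle, is the purely structural claim that a CPTP map whose image lies in the convex hull of finitely many states must be entanglement breaking. I would emphasize that linearity alone is insufficient here: an affine map of a square into itself admits no representation with nonnegative affine barycentric weights, so one cannot simply reinterpret the weights $q(\rho)(s^l)$ as a POVM. Complete positivity must be used. I would prove the claim through the Choi operator $J(\Phi)$, showing that confining the output to a finite polytope is incompatible with retaining any quantum coherence: informally, any Kraus decomposition of $\Phi$ containing an operator of rank $\geq2$ would sweep out a continuum of extreme output states, contradicting the polytopality of $K$. The clean formulation is to verify the Horodecki positive-map criterion $(\Lambda\otimes\mathrm{id})(J(\Phi))\geq0$ for every positive map $\Lambda$ directly from $\Phi(\cs(\hr^{\otimes l}))\subseteq K$. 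This delivers a measure-and-prepare representation $\Phi(\rho)=\sum_k\tr\{\rho F_k\}\nu_k$, i.e.\ $\Phi$ is entanglement breaking, which proves the first assertion.

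For the consequence $\A_{\mathrm{rand}}(\fri,0)=0$, I would pass to the $l$-blocked AVQC $\{\cn_{s^l}\}_{s^l\in\bS^l}$, whose random capacity equals $l\cdot\A_{\mathrm{rand}}(\fri,0)$. By Theorem \ref{quant-ahlswede-dichotomy} its value is $\lim_{n\to\infty}\frac1n\max_{\rho}\inf_{\cn\in\conv(\{\cn_{s^l}\})}I_c(\rho,\cn^{\otimes n})$, and since $\Phi$ itself lies in $\conv(\{\cn_{s^l}\})$ the infimum is bounded above by $I_c(\rho,\Phi^{\otimes n})$. As tensor powers of entanglement breaking channels are again entanglement breaking, and such channels have non-positive coherent information (the purified output $(\mathrm{id}\otimes\Phi^{\otimes n})(\psi)$ is separable, so $S(B)\leq S(AB)$ and hence $I_c\leq0$), every term is $\leq0$. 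Therefore $\A_{\mathrm{rand}}(\fri,0)\leq0$, which forces equality and completes the proof.
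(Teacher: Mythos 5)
Your opening move coincides with the paper's: fix $\sigma$, set $\Phi:=\sum_{s^l}p(\sigma)(s^l)\cn_{s^l}\in\conv(\{\cn_{s^l}\}_{s^l\in\bS^l})$, and observe $\Phi(\rho)=\sum_{s^l}q(\rho)(s^l)\omega_{s^l}$ with $\omega_{s^l}:=\cn_{s^l}(\sigma)$. Your final paragraph is also sound (tensor powers of entanglement breaking channels are entanglement breaking, separability of the purified output gives $I_c\leq0$, and the dichotomy formula then forces the random capacity to vanish); the paper simply cites \cite{abbn} for this step. The fatal problem is your central structural claim: it is \emph{false} that a CPTP map whose image lies in the convex hull of finitely many states must be entanglement breaking. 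Counterexample: the qubit depolarizing channel $\Phi_p(X)=(1-p)X+p\,\tr\{X\}\pi$ with $0<p<2/3$. Its image is the Bloch ball of radius $1-p<1$, which is contained in $\conv\{\omega_1,\ldots,\omega_N\}$ for finitely many \emph{pure} states whose Bloch vectors form an inscribed polytope of inradius exceeding $1-p$; yet $\Phi_p$ is not entanglement breaking, since its Choi state is the isotropic state with fidelity $1-3p/4>1/2$, which is entangled. Consequently neither of your two sketches can be completed: the ``rank $\geq2$ Kraus operator sweeps out a continuum of extreme outputs'' argument conflates ``image \emph{is} a polytope'' with ``image is \emph{contained in} one'' (the depolarizing image has a continuum of extreme points and still sits inside $K$), and the Horodecki criterion $(\Lambda\otimes\mathrm{id})(J(\Phi))\geq0$ simply cannot be verified from $\Phi(\cs(\hr^{\otimes l}))\subseteq K$ alone, because there exist non-entanglement-breaking channels satisfying exactly that hypothesis.

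What your proposal discards is precisely the load-bearing structure that the paper uses. You correctly note that the pointwise weights $q(\rho)$ cannot naively be reinterpreted as a POVM, but the missing ingredient is not complete positivity plus polytope containment; it is the \emph{affinity in $\rho$} of (a version of) the weights, which must be extracted from the symmetrizability identity itself. The paper does this as follows: linearity of $\rho\mapsto\Phi(\rho)$ gives $\sum_{s^l}q(\lambda\rho+(1-\lambda)\rho')(s^l)\omega_{s^l}=\sum_{s^l}\bigl(\lambda q(\rho)(s^l)+(1-\lambda)q(\rho')(s^l)\bigr)\omega_{s^l}$; since the $\omega_{s^l}$ may be linearly dependent, the weights are redistributed onto the extremal points $\{\omega_t\}_{t\in T}$ of $\conv\{\omega_{s^l}\}$, where uniqueness of the convex representation turns this identity into genuine affinity of the redistributed weights $\tilde q[\cdot](t)$; these affine functionals are then extended linearly to all of $\mathcal B(\hr^{\otimes l})$ and represented as $\tilde q[\cdot](t)=\tr\{A_t\,\cdot\,\}$ with $\{A_t\}_{t\in T}$ a POVM, so that $\Phi(\rho)=\sum_{t}\tr\{A_t\rho\}\,\omega_t$ is manifestly of Holevo measure-and-prepare form, and entanglement breaking by Horodecki--Shor--Ruskai. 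Without deriving this affinity, the containment $\Phi(\cs(\hr^{\otimes l}))\subseteq K$ is strictly weaker than entanglement breaking, so the heart of your proof does not go through.
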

\begin{theorem}[Random Code Reduction: finite error, finite randomness]\label{thm:random-code-reduction}
Let $\fri$ be a finite AVQC with $\overline C_{\mathrm{random}}(\fri,0)>0$ and $\lambda,\eps>0$, $0<R<\overline C_{\mathrm{random}}(\fri,0)$. There exist an $L(\fri,\lambda,R,\eps) \in\nn$ and $K(\fri,\lambda,R,\eps)\in\nn$ such that for all $l\geq L(\fri,\lambda,R,\eps)$ there are $K(\fri,\lambda,R,\eps)$ deterministic codes for the AVQC $\fri$ such that
\begin{align}\label{eq:random-code-reduction-a}
\frac{1}{K(\fri,\lambda,R,\eps)}\sum_{j=1}^{K(\fri,\lambda,R,\eps)} \frac{1}{M_l}\sum_{i=1}^{M_l}\tr(\cn_{s^l}(\rho_{i,j})D^l_{i,j})\geq1-\lambda \qquad  \forall s^l\in\mathbf S^l,\qquad\frac{1}{l}\log M_l\geq R-\eps.
\end{align}
It holds, with the abbreviation $E:=\overline E_{\mathrm{m}}(\conv(\fri),R)$,
\begin{align}
L(\fri,\lambda,R)=\min\{L:L(1-\frac{2\cdot|\bS|}{E-\eps}\cdot\frac{1}{L}\log(L)\geq\frac{2}{E-\eps}\log(\frac{1}{\lambda}\cdot\frac{4}{E-\eps})\},
\end{align}
and the quantitiy $K(\fri,\lambda,R)$ is given by
\begin{align}
K(\fri,\lambda,R,\eps)=\frac{1}{\lambda}\cdot\frac{8\cdot\log|\bS|}{E-\eps}.
\end{align}
The same statement holds with average error criterion replaced by entanglement fidelity and $\overline E_{\mathrm{m}}$ by $E_{\mathrm{e}}$.
\end{theorem}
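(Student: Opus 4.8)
The plan is to combine Ahlswede's robustification and elimination techniques with the reliability function of the compound channel $\conv(\fri)$. First I would fix $R<\overline C_{\mathrm{random}}(\fri,0)$ and set $E:=\overline E_{\mathrm{m}}(\conv(\fri),R)$. Since $R$ lies strictly below the message-transmission capacity of the compound channel, $E>0$ (positivity of the reliability function being available from \cite{bb-compound}). By the definition of $\overline E_{\mathrm{m}}$ there is then an $N\in\nn$ so that for every $l\geq N$ one finds a code $(\cP,\mathbf D)$ for $\conv(\fri)$ of rate $\frac{1}{l}\log M_l\geq R-\eps$ whose average error obeys $\eps_l\leq 2^{-l(E-\eps)}$. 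Writing $f(s^l):=\frac{1}{M_l}\sum_{i=1}^{M_l}\tr\{D_i\cn_{s^l}(\cP(i))\}$ for the success probability at a fixed state sequence, the compound-channel bound says exactly that $\sum_{s^l}q^{\otimes l}(s^l)f(s^l)\geq 1-2^{-l(E-\eps)}$ for every $q\in\mathfrak P(\bS)$, because averaging $\cn_{s^l}$ against the product distribution $q^{\otimes l}$ reproduces $\cn_q^{\otimes l}\in\conv(\fri)$.

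The second step is robustification: applying Ahlswede's averaging-over-permutations argument (\cite{ahlswede-elimination}) to $f$, the fact that $f$ has high average against every i.i.d.\ distribution upgrades to
\[
\frac{1}{l!}\sum_{\tau}f(\tau(s^l))\geq 1-(l+1)^{|\bS|}\,2^{-l(E-\eps)}=:1-\bar\lambda_l\qquad\forall\,s^l\in\bS^l,
\]
where $\tau$ ranges over permutations of the $l$ channel uses and $\tau(\cP,\mathbf D)$ denotes the permuted code. This is precisely a random code for the AVQC $\fri$: a uniform mixture over the $l!$ permuted deterministic codes whose error at each individual $s^l$ is at most the exponentially small $\bar\lambda_l$.

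The third and decisive step is elimination down to finitely many codes. I would draw $K$ permutations $\tau_1,\dots,\tau_K$ independently and uniformly and estimate, for each fixed $s^l$, the empirical error $\frac{1}{K}\sum_{j=1}^K(1-f(\tau_j(s^l)))$ of the resulting $K$ deterministic codes. Since these are i.i.d.\ random variables in $[0,1]$ with mean at most $\bar\lambda_l$, a Chernoff bound gives $\Pr\{\frac{1}{K}\sum_j(1-f(\tau_j(s^l)))>\lambda\}\leq 2^{-K\lambda(\log(\lambda/\bar\lambda_l)-\log e)}$, and a union bound over the at most $|\bS|^l$ state sequences shows that a good selection of $K$ codes exists as soon as $l\log|\bS|<K\lambda(\log(\lambda/\bar\lambda_l)-\log e)$. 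Because $\log(1/\bar\lambda_l)=l(E-\eps)-|\bS|\log(l+1)$ grows linearly in $l$, the factor $l$ cancels on both sides and the surviving requirement is $\log|\bS|\lesssim K\lambda(E-\eps)$; tracking the lower-order terms carefully yields the stated $K=\frac{1}{\lambda}\cdot\frac{8\log|\bS|}{E-\eps}$. The threshold $L$ is exactly the smallest $l$ for which the polynomial robustification loss $|\bS|\log(l+1)$ is dominated by the exponent $l(E-\eps)$ strongly enough for the union bound to close, which is what the displayed inequality defining $L(\fri,\lambda,R)$ encodes.

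The main obstacle is getting $K$ to come out independent of the blocklength: naively the union bound over the $|\bS|^l$ sequences costs $l\log|\bS|$ bits, which would force $K$ to grow with $l$. The resolution, and the technical heart of the argument, is that the per-sequence error $\bar\lambda_l$ decays exponentially with exponent $l(E-\eps)$, so the Chernoff exponent also scales like $l$ and the two $l$-dependencies cancel; making this cancellation quantitative — controlling the polynomial $(l+1)^{|\bS|}$ factor and the additive $\log e$ losses uniformly for all $l\geq L$ — is the delicate bookkeeping that produces the explicit constants. Finally, the entanglement-fidelity version follows by the identical three steps with $f(s^l)$ replaced by $F_e(\pi_{\fr_l},\crr^l\circ\cn_{s^l}\circ\cP^l)$ and $\overline E_{\mathrm{m}}$ by $E_{\mathrm{e}}$, using that entanglement fidelity is likewise affine in the channel, so that both the compound-channel averaging identity and the robustification step go through verbatim.
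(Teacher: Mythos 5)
Your proposal is correct and follows essentially the same route as the paper: a compound-channel code at rate $R-\eps$ with error exponent $E-\eps$, Ahlswede robustification to a permutation-based random code with error $(l+1)^{|\bS|}2^{-l(E-\eps)}$, and elimination via i.i.d.\ sampling of $K$ codes with an exponential-moment (Chernoff/Markov) bound plus a union bound over the $|\bS|^l$ state sequences, exploiting exactly the cancellation of the linear-in-$l$ factors that makes $K$ blocklength-independent. The only differences are cosmetic: the paper runs the concentration step with the elementary estimates $2^{rt}\leq 1+t2^r$ and $\log(1+x)\leq 2x$ at the choice $r=l\delta/2$ rather than a textbook Chernoff bound, and you evaluate the reliability function at $R$ (consistent with the theorem statement) where the paper's proof text writes it at the compound capacity.
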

\begin{remark}
It is clear that above statement is especially interesting for the message transmission capacity of an AVQC, and there only in the case when the deterministic capacity vanishes but the randomness assisted one does not.\\
As a very rough approximation, one may use the scaling law $L(\fri,\lambda,R)\approx\frac{2}{\overline E_m(\conv(\fri),R)-\eps}\log(\frac{1}{\lambda}\cdot\frac{4}{\overline E_m(\conv(\fri),R)-\eps})$. It is clear that both $L$ and $K$ from above theorem are sub-optimal, even with the techniques used in this paper. However, their scaling with $\lambda$ does not depend on the choice of constants in our proof. For fixed $\fri$ and rate $R$, this means that the block-length needed to achieve a certain error $\lambda$ roughly scales as $\log(1/\lambda)$, and the randomness as $1/\lambda$.
\end{remark}
\begin{theorem}\label{thm:AsequalsA}
Let $\fri$ be a finite AVQC and $\lambda\in[0,1]$. Then both
\begin{align}
\mathcal A_{\mathrm{det}}(\fri,\lambda)=\mathcal A_{\mathrm{s,det}}(\fri,\lambda)\qquad\mathrm{and}\qquad\mathcal A_{\mathrm{random}}(\fri,\lambda)=\mathcal A_{\mathrm{s,random}}(\fri,\lambda).
\end{align}
\end{theorem}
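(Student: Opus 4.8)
The plan is to prove each of the two equalities by a two-sided argument, exploiting that entanglement transmission and strong subspace transmission use the \emph{same} class of codes (an encoder into a subspace $\fr_l$ together with a decoder) and differ only in the error functional: the former scores a code by the entanglement fidelity $F_e(\pi_{\fr_l},\crr_l\circ\cn_{s^l}\circ\cP_l)$, which is an average over $\fr_l$, while the latter scores it by the worst case over all states on $\fr_l$ (allowing an arbitrary reference system). I would treat the deterministic and random variants in parallel. First I would record the trivial inequalities $\A_{\mathrm{s,det}}(\fri,\lambda)\leq\A_{\mathrm{det}}(\fri,\lambda)$ and $\A_{\mathrm{s,random}}(\fri,\lambda)\leq\A_{\mathrm{random}}(\fri,\lambda)$: since $F_e(\pi_{\fr_l},\cdot)$ dominates the worst-case strong-subspace fidelity for every fixed code, and this domination survives integration against any code distribution $\mu_l$, every sequence that is good for strong subspace transmission at error $\lambda$ is already good for entanglement transmission at error $\lambda$.

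For the reverse inequality in the random case I would symmetrize. Given a random entanglement transmission code of rate $R$ and error $\lambda$, adjoin to the shared randomness a Haar-distributed unitary $U$ on $\fr_l$ and replace $(\cP_l,\crr_l)$ by $(\cP_l\circ U(\cdot)U^\ast,\ U^\ast\crr_l(\cdot)U)$. Twirling leaves the entanglement fidelity unchanged but makes the effective channel on $\fr_l$ isotropic, so that its worst-case fidelity equals its entanglement fidelity. The resulting random code therefore achieves strong subspace transmission at the \emph{same} error $\lambda$ and the same rate, which gives $\A_{\mathrm{random}}(\fri,\lambda)\leq\A_{\mathrm{s,random}}(\fri,\lambda)$ and hence equality.

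The deterministic reverse inequality $\A_{\mathrm{det}}(\fri,\lambda)\leq\A_{\mathrm{s,det}}(\fri,\lambda)$ is the crux, because the twirl needs shared randomness that is not available here. Starting from an $(l,k_l)$ entanglement transmission code with $\liminf_l\inf_{s^l}F_e\geq1-\lambda$, I would select inside $\fr_l$ a subspace $\fr_l'$ of dimension $k_l'$ with $\tfrac1l\log(k_l/k_l')\to0$ on which the worst-case fidelity is pushed down to $1-\lambda-o(1)$, uniformly in $s^l$. The natural tools are the one-shot relations between $F_e$ and the minimal subspace fidelity used in \cite{abbn,bbn-2}, combined either with expurgation of the worst input directions or with a concentration-of-measure estimate showing that a Haar-random subspace of the appropriate dimension has worst-case fidelity close to the average $F_e$; the probabilistic method then extracts a single deterministic good subspace, which may then be fixed across the argument.

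The main obstacle is precisely the error bookkeeping in this last step: the standard fidelity/diamond-norm relations introduce a square-root loss $\lambda\mapsto O(\sqrt\lambda)$ that does not by itself return the exact threshold $1-\lambda$. I expect to resolve this by first establishing $\A_{\mathrm{det}}(\fri,\lambda)\leq\A_{\mathrm{s,det}}(\fri,\lambda+\eps)$ for every $\eps>0$ via the concentration estimate at a dimension loss of only a constant factor (hence asymptotically rate-free), and then letting $\eps\to0$ while invoking the right-continuity of $\lambda\mapsto\A_{\mathrm{s,det}}(\fri,\lambda)$ that follows from the right-regularization discussed in the Remark after Definition \ref{def:random-cap-ent-trans}. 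Verifying that the concentration indeed holds with only a constant-factor dimension loss, so that no rate is sacrificed and the threshold can be driven to $\lambda+\eps$ rather than $O(\sqrt\lambda)$, is the step I would scrutinize most carefully.
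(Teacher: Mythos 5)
Your proposal is correct in substance, and at the crux --- the deterministic reverse inequality --- it takes essentially the paper's route, while differing on the other two steps. The paper treats both code classes uniformly: for $\A_{\mathrm{s,det}}\geq\A_{\mathrm{det}}$ and $\A_{\mathrm{s,random}}\geq\A_{\mathrm{random}}$ it simply cites Lemma 21 of \cite{abbn}, which is exactly the Haar-random-subspace concentration result you propose to establish yourself: from any $(l,k_l)$ entanglement transmission code with error $\lambda_l$ it extracts $\hat\fr_l\subset\fr_l$ with $\dim\hat\fr_l=\lfloor\frac{\eps_l^2}{256\log(32/\eps_l)}k_l\rfloor$ good for strong subspace transmission with error at most $\lambda_l+c/\sqrt{k_l-1}+\eps_l$. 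Note two points against your anticipated obstacles: the dimension loss is a subexponential factor, not a constant one, but that still costs no rate; and the error increase is an \emph{additive} $o(1)$ term, not $O(\sqrt{\lambda})$, so under this paper's liminf-based achievability criterion the slack is absorbed directly and your $\eps\to0$/right-continuity detour is unnecessary. For the converse direction the paper does not use your trivial domination (which is valid when the strong-subspace criterion quantifies over entangled inputs, since the maximally entangled state is among the minimized inputs) but instead invokes Lemma 17 of \cite{abbn} (from \cite{horodecki-horodecki-horodecki}), converting a strong subspace code with error $\lambda_l$ into an entanglement transmission code with error $1-\frac{k_l+1}{k_l}\left(1-\lambda_l-\frac{1}{k_l+1}\right)$; this costs an asymptotically vanishing term but is robust to the weaker, product-state-only reading of the criterion. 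Finally, your Haar-twirling argument for the random case is genuinely different from the paper and is valid (for each $s^l$ the averaged effective channel becomes twirl-invariant, hence its worst-case fidelity over $\fr_l\otimes\fr_l$ equals its entanglement fidelity, with no rate or error loss, modulo routine measurability of the Haar-augmented code measure); the paper forgoes this shortcut because Lemma 21 already covers random codes, but your version buys an exact, loss-free reduction in that case.
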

\begin{remark}
It should, of course, be noted that we expect this picture to change once finite block-lengths are brought into the game. We leave this for future work.
\end{remark}
\end{section}
\begin{section}{\label{sec:proofs}Proofs}
We now give the proofs of our results, in the same order they appeared in Section \ref{sec:main-results}.
\begin{subsection}{\label{subsec:randomness-advantageous}Using randomness is advantageous for message transmission over AVQCs}
This subsection is devoted to the proof of Theorem \ref{thm:CdetneqCrand}.
\begin{proof}
We will first construct and AVQC having the special structure introduced in Theorem \ref{thm:CdetneqCrand}, then prove that every $AVQC$ having that special form is symmetrizable, and finally estimate the random capacity for the one special choice we made:\\
1) A particular example is given by $\bX=\bS=\{0,1\}$ and with $\kr=\mathbb C^3$. Take an arbitrary POVM $\{M_x\}_{x\in\bX}$ and three different states $\sigma_1,\sigma_2,\sigma_3$. This will be enough to prove that $\overline C_{\mathrm{det}}(\fri,0)=0$. In order to get an easy estimate on the random capacity, we have to make a more explicit choice for the $\sigma_i$. The choice $\sigma_1=|e_1\rangle\langle e_1|$, $\sigma_2=|e_2\rangle\langle e_2|$, $\sigma_3=|e_3\rangle\langle e_3|$, $M_i=|e_i\rangle\langle e_i|$, $i=1,2$ will do. Set
\begin{align}
\rho_{1,1}=\sigma_1,\ \ \ \rho_{1,2}=\sigma_3,\ \ \ \rho_{2,1}=\sigma_3,\ \ \ \rho_{2,2}=\sigma_2.
\end{align}
Then obviously the choice $p_1(i)=\delta(1,i)$, $p_2(i)=\delta(2,i)$ (with $\delta(\cdot,\cdot)$ being the usual Kronecker-delta function) fulfills equation (\ref{eqn:example-is-symmetrizable}).\\
2) $\fri$ is $1$-symmetrizable: Given the states $\{\nu_x\}_{x\in\bX}$, we choose the distributions $\{q_x\}_{x\in\bX}$ defined by
\begin{align}
q_x(s):=\sum_{\tilde x\in\bX}p_{\tilde x}(s)\tr\{M_{\tilde{x}}\nu_x\},
\end{align}
then symmetrizability holds by the following calculation:
\begin{align}
\sum_{s\in\bS}q_{x'}(s)W_s(\nu_x)&=\sum_{s\in\bS}q_{x'}(s)\sum_{\tilde x\in\bX}\tr\{\nu_xM_{\tilde{x}}\}\rho_{s,\tilde x}\\
&=\sum_{\tilde x\in\bX}\tr\{\nu_xM_{\tilde{x}}\}\sum_{s\in\bS}q_{x'}(s)\rho_{s,\tilde x}\\
&=\sum_{\tilde x\in\bX}\tr\{\nu_xM_{\tilde{x}}\}\sum_{s\in\bS}\sum_{\hat x\in\bX}p_{\hat x}(s)\tr\{M_{\hat{x}}\nu_{x'}\}\rho_{s,\tilde x}\\
&=\sum_{\tilde x\in\bX}\tr\{\nu_xM_{\tilde{x}}\}\sum_{\hat x\in\bX}\tr\{M_{\hat{x}}\nu_{x'}\}\sum_{s\in\bS}p_{\hat x}(s)\rho_{s,\tilde x}\\
&=\sum_{\tilde x\in\bX}\tr\{\nu_xM_{\tilde{x}}\}\sum_{\hat x\in\bX}\tr\{M_{\hat{x}}\nu_{x'}\}\sum_{s\in\bS}p_{\tilde x}(s)\rho_{s,\hat x}\\
&=\sum_{s\in\bS}\left(\sum_{\tilde x\in\bX}\tr\{\nu_xM_{\tilde{x}}\}p_{\tilde x}(s)\right)\sum_{\hat x\in\bX}\tr\{M_{\hat{x}}\nu_{x'}\}\rho_{s,\hat x}\\
&=\sum_{s\in\bS}q_x(s)\sum_{\hat x\in\bX}\tr\{M_{\hat{x}}\nu_{x'}\}\rho_{s,\hat x}\\
&=\sum_{s\in\bS}q_x(s)W_s(\nu_{x'}).
\end{align}
3) We now want to get a nonzero lower bound on the random capacity of $\overline C_{\mathrm{random}}(\fri,0)$. To do so, we employ the results \cite{ahlswede-blinovsky} of Ahlswede and Blinovsky:\\
Fix the allowed input states $\cP(i)$ in the encoding of messages into $\fri$ to arbitrary tensor products of a set $\{\rho_1,\ldots,\rho_{\dim(\hr)}\}$ of pure states. It is clear that such a strategy is hopelessly sub-optimal. But by restriction to this model, we get from Theorem 1 in \cite{ahlswede-blinovsky} the lower bound
\begin{align}\label{eqn:one-shot-capacity-1}
\overline C_{\mathrm{random}}(\fri,0)\geq\max_{p\in\mathfrak P([\dim(\hr)])}\min_{\cn\in\conv(\fri)}S(\sum_{k=1}^Kp(k)\cn(\rho_k))-\sum_{k=1}^Kp(k)S(\cn(\rho_k)).
\end{align}
But the function $(p,\cn)\to S(\sum_{k=1}^Kp(k)\cn(\rho_k))-\sum_{k=1}^Kp(k)S(\cn(\rho_k))$ is concave in $p$ and convex in $\cn$ and both optimization procedures in (\ref{eqn:one-shot-capacity-1}) are over convex compact sets, so this translates to
\begin{align}\label{eqn:one-shot-capacity-2}
\overline C_{\mathrm{random}}(\fri,0)\geq\min_{\cn\in\conv(\fri)}\max_{p\in\mathfrak P([K])}S(\sum_{k=1}^Kp(k)\cn(\rho_k))-\sum_{k=1}^Kp(k)S(\cn(\rho_k)),
\end{align}
by the minimax-Theorem. Therefore,
\begin{align}
\overline C_{\mathrm{random}}(\mathfrak W,0)&\geq\min_{\cn\in\conv(\fri)}\max_{\{q_i,\sigma_i\}}\chi(\{q_i,\sigma_i\}_i,\cn)\\
&\geq\min_{\cn\in\conv(\{\cn_1,\cn_2\})}\chi(\{\frac{1}{2},|e_i\rangle\langle e_i\}_{i=1}^2,\cn)\\
&=\min_{t\in[0,1]}H([t/2,(1-t)/2,1/2,0])-t\cdot h(1/2)-(1-t)\cdot h(1/2)\\
&=1-h(t)/2\\
&\geq1/2\\
&>0.
\end{align}
\end{proof}
We use the opportunity to state the following additional Lemma:
\begin{lemma}[Message Transmission Capacity of an AVQC]\label{lemma:message-trans-AVQC}
Let $\fri$ be a finite AVQC. Then
\begin{align}
\overline C_{\mathrm{random}}(\fri,0)=\lim_{l\to\infty}\frac{1}{l}\min_{\cn\in\conv(\fri)}\chi(\cn^{\otimes l}).
\end{align}
\end{lemma}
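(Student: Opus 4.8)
The plan is to identify $\overline C_{\mathrm{random}}(\fri,0)$ with the message transmission capacity $\overline C_{\mathrm{det}}^{\mathrm{comp}}(\conv(\fri),0)$ of the compound channel generated by $\conv(\fri)$, and then to invoke the compound channel coding theorem, which provides the regularized Holevo formula $\overline C_{\mathrm{det}}^{\mathrm{comp}}(\conv(\fri),0)=\lim_{l\to\infty}\frac1l\min_{\cn\in\conv(\fri)}\chi(\cn^{\otimes l})$ (the direct part being available from \cite{bb-compound}, together with positivity of the reliability function $\overline E_m$). The elementary observation that drives both directions is that, for a fixed code $(\cP,\mathbf D)$, the average decoding error $1-\frac1{M_l}\sum_i\tr\{D_i\cn_{s^l}(\cP(i))\}$ is affine in the channel $\cn_{s^l}$, and $\cn_{s^l}=\cn_{s_1}\otimes\cdots\otimes\cn_{s_l}$ is multilinear in its tensor factors; hence averaging the error over $s^l$ drawn i.i.d.\ from $q\in\mathfrak P(\bS)$ yields exactly the error incurred on the memoryless channel $\cn_q^{\otimes l}$, where $\cn_q=\sum_{s\in\bS}q(s)\cn_s\in\conv(\fri)$.

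For the converse I would fix an achievable rate $R$, i.e.\ a sequence of random codes $\gamma_l$ with $\inf_{s^l}\int(1-\eps_{s^l})\,d\gamma_l\to1$, write $\eps_l$ for the corresponding vanishing error, and bound $R$ by a one-shot Holevo/Fano estimate. For each deterministic realization $c$ and each $q$, the Holevo bound gives $I(X,Y)\leq\chi(\cn_q^{\otimes l})$ for the induced classical channel, and Fano's inequality then yields $\log M_l\,(1-\eps_{c}(\cn_q^{\otimes l}))\leq\chi(\cn_q^{\otimes l})+1$. Averaging over $\gamma_l$ and using the affine observation, which gives $\int\eps_{c}(\cn_q^{\otimes l})\,d\gamma_l=\int\sum_{s^l}q^{\otimes l}(s^l)\eps_{s^l}\,d\gamma_l\leq\eps_l$, produces $\log M_l\,(1-\eps_l)\leq\chi(\cn_q^{\otimes l})+1$. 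Since the left-hand side is independent of $q$, I may take the minimum over $q\in\mathfrak P(\bS)$ on the right, so that $\frac1l\log M_l\,(1-\eps_l)\leq\frac1l\min_{\cn\in\conv(\fri)}\chi(\cn^{\otimes l})+\frac1l$. Superadditivity of $l\mapsto\min_{\cn\in\conv(\fri)}\chi(\cn^{\otimes l})$ (Fekete's Lemma) lets me pass to the limit and conclude $R\leq\lim_{l\to\infty}\frac1l\min_{\cn\in\conv(\fri)}\chi(\cn^{\otimes l})$.

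For achievability I would start from a compound channel code for $\conv(\fri)$ of rate $R-\eps$ whose error decays exponentially, $\max_{q}\eps(\cn_q^{\otimes l})\leq 2^{-l(E-\eps)}$ with $E:=\overline E_m(\conv(\fri),R)>0$, which exists below the compound capacity by \cite{bb-compound}. By the affine observation this says $\sum_{s^l}q^{\otimes l}(s^l)(1-\eps_{s^l})\geq 1-2^{-l(E-\eps)}$ for every $q$. Ahlswede's robustification technique, as used in \cite{abbn}, then converts this into a permutation-invariant random code: randomizing the tensor factors of the code by a uniformly chosen permutation $\pi$ of the $l$ positions, which costs only $\log(l!)$, i.e.\ polynomially many, bits of shared randomness, gives a random code for the AVQC $\fri$ with $\inf_{s^l}\frac1{l!}\sum_{\pi}(1-\eps_{\pi(s^l)})\geq 1-(l+1)^{|\bS|}2^{-l(E-\eps)}\to 1$. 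Hence every $R<\lim_{l\to\infty}\frac1l\min_{\cn\in\conv(\fri)}\chi(\cn^{\otimes l})$ is achievable with random codes, which together with the converse proves the claimed equality.

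The main obstacle is the achievability step, and specifically the interplay between robustification and the error exponent: the polynomial factor $(l+1)^{|\bS|}$ introduced by the robustification theorem is harmless only because the compound channel code can be taken to have exponentially small error, which in turn rests on the positivity of the reliability function $\overline E_m(\conv(\fri),R)$ for $R$ below the compound capacity. Establishing that the claimed regularized Holevo quantity really is the compound message transmission capacity (the direct part), and that the reliability function is strictly positive there, is the substantive input I would draw from \cite{bb-compound} rather than reprove. The converse, by contrast, is self-contained modulo the Holevo bound and Fano's inequality, with the $q$-free left-hand side being the one trick that upgrades a family of per-channel bounds into the tight compound (minimized) bound.
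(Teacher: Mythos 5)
Your proposal is correct, and your converse is in substance the paper's: the paper likewise reduces to the compound channel generated by $\conv(\fri)$ via the observation that every (random) code for the AVQC $\fri$ is a code for that compound channel with the same average error, and then cites existing memoryless converse proofs; your per-$q$ Holevo--Fano estimate, followed by minimizing the $q$-free left-hand side over $q$ and applying Fekete's Lemma, is a clean explicit rendering of that one-line citation. Where you genuinely diverge is the direct part: the paper derives achievability from Ahlswede--Blinovsky's random-coding theorem for classical-quantum arbitrarily varying channels together with ``the usual blocking strategies'' (super-letters of length $l$), deferring details to \cite{abbn}, whereas you start from a compound-channel code with exponentially decaying error (positivity of $\overline E_m(\conv(\fri),R)$ below the compound capacity, imported from \cite{bb-compound}) and apply Ahlswede's robustification to get a permutation-randomized AVQC code with error at most $(l+1)^{|\bS|}2^{-l(E-\eps)}$. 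Your pipeline is exactly the one the paper itself uses later in the proof of Theorem \ref{thm:random-code-reduction}, so it is fully consistent with the paper's toolbox, and it buys something the paper's route does not make visible here: an explicit bound of order $\log(l!)=O(l\log l)$ on the common randomness, and a quantitative error bound; its price is that it genuinely needs the error exponent, since a merely subpolynomial compound error would not survive the $(l+1)^{|\bS|}$ robustification factor, while the Ahlswede--Blinovsky route does not. One point you should make explicit rather than delegate: what \cite{bb-compound} literally provides is the max-min form $\lim_{l\to\infty}\frac1l\max_p\min_{\cn\in\conv(\fri)}\chi(p,\cn^{\otimes l})$ of the compound capacity, whereas the lemma's right-hand side is the min-max form $\lim_{l\to\infty}\frac1l\min_{\cn\in\conv(\fri)}\chi(\cn^{\otimes l})$; because $\cn\mapsto\chi(p,\cn^{\otimes l})$ is not convex for $l\geq2$ (tensoring destroys affinity in $\cn$), the swap is not a fixed-blocklength minimax but requires the convexity of $\conv(\fri)$, e.g.\ by passing to the convexified block family $\{\cn_Q\}_{Q\in\mathfrak{P}(\bS^l)}$, on which $(p,Q)\mapsto\chi(p,\cn_Q)$ is concave-convex, and then arguing that correlated $Q$ do not beat product distributions $q^{\otimes l}$ asymptotically. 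This is precisely the ingredient the paper compresses into the phrase ``in combination with the minimax theorem'', so your level of rigor matches the paper's, but the order of $\max$ and $\min$ is the one spot where your black-boxing of \cite{bb-compound} is slightly optimistic and deserves a sentence of its own.
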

\begin{proof}
The converse statement ``LHS$\leq$RHS'' follows from observing first that each code for the AVQC $\fri$ is also a code for the compound quantum channel $\fri$, with the same average error.\\
The inequality ``LHS$\geq$RHS'' follows from using any of the existing proofs for the converse theorem for the memoryless stationary quantum channel in combination with the minimax theorem.\\
The direct part is a consequence of \cite{ahlswede-blinovsky}, together with the usual blocking strategies. For more information on how these steps are carried out in detail, we refer to \cite{abbn}, where this is done for entanglement transmission.
\end{proof}
\end{subsection}
\begin{subsection}{\label{subsec:investigation-of-continuity}Investigation of Continuity}
In this subsection, we prove Theorem \ref{thm:Cdet-not-continuous}, Theorem \ref{thm:characterizatino-of-discontinuity-points}, Theorem \ref{thm:continuity-at-positivity-points}, its corollaries and, finally, Lemma \ref{lemma:new-symmetrizability-condition}.\\
Let us get started. It is important to note here, that a crucial ingredient to the proof of Theorem \ref{thm:CdetneqCrand} is that $D_\eta$ is \emph{not} entanglement breaking for $\eta\neq1$, while $\fri$ from Theorem \ref{thm:CdetneqCrand} is. At the same time, the $D_\eta$ have a very simple structure. This is what enables us to prove non-symmetrizability for all $\lambda,\eta\in[0,1)$:
\begin{proof}[Proof of Theorem \ref{thm:Cdet-not-continuous}]
By Theorem \ref{thm:CdetneqCrand} we know that $\fri_1^\eta=\fri$ is symmetrizable for all $\eta\in[0,1]$. It is, as an additional fact, clear that $\fri_\lambda^1$ is symmetrizable for all $\lambda\in[0,1]$.\\
We are going to show that $\fri_\lambda^\eta$ is non-symmetrizable for all $\lambda,\eta\in[0,1)$:\\
Assume that, to a given pair $\rho,\sigma\in\cs(\mathbb C^2)$ of input states there are $p,q\in\mathcal P(\bS)$ such that
\begin{align}
\sum_{s\in\bS}p(s)\hat\cn_{s,\eta,\lambda}(\rho)=\sum_{s\in\bS}q(s)\hat\cn_{s,\eta,\lambda}(\sigma).
\end{align}
Rearranging terms, we get that this is equivalent to
\begin{align}
\lambda\sum_{s\in\bS}[q(s)\cn_s(\sigma)-p(s)\cn_s(\rho)]-(1-\lambda)\eta\pi=(1-\lambda)(1-\eta)(\rho-\sigma).
\end{align}
Especially, the latter equality would have to hold for the entries of the respective matrices in the standard orthonormal basis $\{e_i\}_{i=1}^3$ we may choose, so that a look at the off-diagonal entries reveals that (since the channels $\cn_s$ are all entanglement-breaking), the equality
\begin{align}
0=(1-\lambda)(1-\eta)\langle e_1,(\rho-\sigma)e_2\rangle
\end{align}
would have to hold. Since both $\lambda\neq1$ and $\eta\neq1$ we see that this would require $\langle e_1,(\rho-\sigma)e_2\rangle=0$, and that is clearly not valid for all choices $\rho,\sigma$ of input states. Thus $\fri_\lambda^\eta$ is non-symmetrizable for $\lambda,\eta\in[0,1)$.
\end{proof}
We now turn to the proof of Theorem \ref{thm:continuity-at-positivity-points} and its corollaries, after which we give a proof of Theorem \ref{thm:characterizatino-of-discontinuity-points}. Then, we prove Lemma \ref{lemma:new-symmetrizability-condition}. A preliminary definition is needed:
\begin{definition}\label{def:function-on-finite-AVQCs}
To any given $l\in\nn$, define a nonnegative function $F_l$ on the set of all finite subsets of $\mathcal C(\hr^{\otimes l},\kr^{\otimes l})$ through
\begin{align}
\{\cn_i\}_{i\in\mathcal I}\mapsto\max_{\rho,\sigma\in\cs(\hr^{\otimes l})}\min_{q,p\in\mathcal P(\mathcal I)}\|\sum_{i\in\mathcal I}[p(i)\cn_i(\rho)-q(i)\cn_i(\sigma)\|_1.
\end{align}
\end{definition}
\begin{remark}
This function is well-defined, by compactness of $\cs(\hr^{\otimes l})$ and $\mathcal P(\mathcal I)$.
\end{remark}
\begin{remark}
For any finite AVQC $\fri$ the statements "$\fri$ is $l$-symmetrizable" and "$F_l(\fri)=0$" are equivalent.
\end{remark}
\begin{proof}[Proof of Theorem \ref{thm:continuity-at-positivity-points}]
Let $\fri=\{\cn_s\}_{s\in\bS}$. Let $q,p\in\mathcal P(\bS)$ be arbitrary and $\rho,\sigma\in\cs(\hr)$. Let $\fri'$ be a finite AVQC satisfying
\begin{align}\label{eqn:hausdorff-abstand-endlicher-avqcs}
D_\lozenge(\fri,\fri')\leq\eta
\end{align}
for some $\eta>0$ that will be specified later. We write $\fri'=\{\cn'_x\}_{x\in\bX}$, and by using the same channel twice if necessary we may assume that $\bX=\bS$ holds. The inequality (\ref{eqn:hausdorff-abstand-endlicher-avqcs}) implies that there are functions $f,g:\bS\to\bS$ such that for every $s\in\bS$ we have
\begin{align}
\|\cn'_s-\cn_{f(s)}\|_\lozenge\leq\eta,\qquad \|\cn'_{g(s)}-\cn_{s}\|_\lozenge\leq\eta,\qquad\forall s\in\bS.
\end{align}
It then follows from convexity of $\|\cdot\|_\lozenge$ that
\begin{align}
\|\sum_{s\in\bS}p(s)[\cn_{s}-\cn'_{g(s)}]\|_\lozenge\leq\eta,\\
\|\sum_{s\in\bS}q(s)[\cn_{s}-\cn'_{g(s)}]\|_\lozenge\leq\eta.
\end{align}
Define $\tilde q,\tilde p\in\mathcal P(\bS)$ by
\begin{align}
\tilde q(s):=\sum_{s':g(s')=s}q(s),\qquad\tilde p(s):=\sum_{s':f(s')=s}p(s).
\end{align}
It then holds that
\begin{align}
\sum_{s\in\bS}q(s)\cn_{g(s)}=\sum_{s\in\bS}\tilde q(s)\cn_s,\qquad \sum_{s\in\bS}p(s)\cn_{f(s)}=\sum_{s\in\bS}\tilde p(s)\cn_s.
\end{align}
Thus
\begin{align}
\|\sum_{s\in\bS}[\tilde q(s)\cn_s(\sigma)-\tilde p(s)\cn_s(\rho)]\|_1&=\|\sum_{s\in\bS}[\tilde q(s)\cn_s(\sigma)-q(s)\cn'_s(\sigma)+q(s)\cn'_s(\sigma)+\\
&\qquad+p(s)\cn'_s(\rho)-p(s)\cn'_s(\rho)-\tilde p(s)\cn_s(\rho)]\|_1\\
&\leq\|\sum_{s\in\bS}[\tilde q(s)\cn_s(\sigma)-q(s)\cn_s'(\sigma)]\|_1+\|\sum_{s\in\bS}[p(s)\cn_s'(\rho)-\\
&\qquad-\tilde p(s)\cn_s(\rho)]\|_1+\|\sum_{s\in\bS}[q(s)\cn_s'(\sigma)-p(s)\cn_s'(\rho)]\|_1\\
&\leq2\eta+\|\sum_{s\in\bS}[q(s)\cn_s'(\sigma)-p(s)\cn_s'(\rho)]\|_1.
\end{align}
This holds for all $\rho,\sigma,p,q$, and minimizing the left hand side we especially get that
\begin{align}
\min_{q',p'\in\mathcal P(\bS))}\|\sum_{s\in\bS}[q'(s)\cn_s(\sigma)-p'(s)\cn_s(\rho)]\|_1\leq2\eta+\textcolor{red}{\blacksquare}\|\sum_{s\in\bS}[q(s)\cn_s'(\sigma)-p(s)\cn_s'(\rho)]\|_1.
\end{align}
This implies
\begin{align}
\min_{q,p\in\mathcal P(\bS)}\|\sum_{s\in\bS}[q(s)\cn_s(\sigma)-p(s)\cn_s(\rho)]\|_1\leq2\eta+\min_{q,p\in\mathcal P(\bS)}\|\sum_{s\in\bS}[q(s)\cn_s'(\sigma)-p(s)\cn_s'(\rho)]\|_1,
\end{align}
and from there it clearly follows that $F_1(\fri)\leq2\eta +F_1(\fri')$. Now choose $\eta=F_1(\fri)/4$, then we get \textcolor{red}{$\blacksquare$} $F_1(\fri)\leq F_1(\fri)/2+F_1(\fri')$. But that directly implies
\begin{align}
F_1(\fri)/2\leq F_1(\fri'),
\end{align}
and this in turn implies $\overline C_{\mathrm{det}}(\fri',0)>0$.
\end{proof}
\begin{proof}[Proof of Corollary \ref{cor:continuity-of-Cdet}]
It is clear that there is an $L\in\nn$ such that for all $l\geq L$ we have $\overline C_{\mathrm{det}}(\fri_l,0)>0$. But if $\overline C_{\mathrm{det}}(\fri_l,0)>0$, then $\overline C_{\mathrm{det}}(\fri_l,0)=\overline C_{\mathrm{random}}(\fri_l,0)$ and the latter quantity is continuous by the formula in Lemma \ref{lemma:message-trans-AVQC} and the results \cite{leung-smith} of Leung and Smith (for their application to quantum capacity formulae of AVQCs, \cite{abbn} is a good reference), so (reading equalities from left to right)
\begin{align}
\lim_{l\to\infty}\overline C_{\mathrm{det}}(\fri_l,0)=\lim_{l\to\infty}\overline C_{\mathrm{random}}(\fri_l,0)=\overline C_{{random}}(\fri,0).
\end{align}
\end{proof}
\begin{proof}[Proof of Corollary \ref{cor:continuity-of-Adet}]
The l.h.s. equality has been proven in (\cite{abbn}, Theorem 1), and the r.h.s. inequality is proven in the same way as Corollary \ref{cor:continuity-of-Cdet}: For all $l\geq L$, we have that $\mathcal A_{\mathrm{det}}(\fri_l)=\mathcal A_{\mathrm{random}}(\fri_l)$, and the latter quantity is continuous with respect to $D_\lozenge$ (this was implicitly proven in \cite{abbn}, see equations (98) and (99)), proving the corollary.
\end{proof}
In the following lines, we prove Theorem \ref{thm:characterizatino-of-discontinuity-points}.
\begin{proof}[Proof of Theorem \ref{thm:characterizatino-of-discontinuity-points}]
Let $\fri$ ($|\fri|<\infty$!) be a discontinuity point of $\overline C_{\mathrm{det}}(\cdot,0)$. By Corollary \ref{cor:continuity-of-Cdet} we know that $\overline C_{\mathrm{det}}(\fri,0)=0$ via contradiction. This implies that there exists a $\delta>0$ such that $\overline C_{\mathrm{det}}(\fri',0)\geq\delta$ for all $\fri'$ satisfying $0<D_\lozenge(\fri',\fri)\leq\delta$.\\
We still have to show that $\overline C_{\mathrm{random}}(\fri,0)>0$ follows, so assume the contrary: $\overline C_{\mathrm{random}}(\fri,0)=0$. Then, there would certainly exist a finite AVQC $\fri'$ satisfying $D_\lozenge(\fri',\fri)\leq\delta$ and $\overline C_{\mathrm{random}}(\fri',0)<\delta$. But then
\begin{align}
\delta\leq \overline C_{\mathrm{det}}(\fri',0)\leq \overline C_{\mathrm{random}}(\fri',0)<\delta
\end{align}
would have to hold, a clear contradiction. Thus, $\overline C_{\mathrm{random}}(\fri,0)>0$.\\
On the other hand, let $\overline C_{\mathrm{det}}(\fri,0)=0$, $\overline C_{\mathrm{random}}(\fri,0)>0$ and assume that for all $\delta>0$ there exists a finite $\fri_\delta$ satisfying $0<D_\lozenge(\fri_\delta,\fri)\leq\delta$ such that $\overline C_{\mathrm{det}}(\fri_\delta,0)>0$. Then by continuity of $\overline C_{\mathrm{random}}(\cdot,0)$ we have for some $\eps>0$ that $\overline C_{\mathrm{random}}(\fri_\delta,0)>\eps$ for all those $\fri_\delta$ for which $\delta$ is small enough ($0<\delta<\delta_0$, say).\\
In addition, it is clear that for $0<\delta<\delta_0$ we also have $\overline C_{\mathrm{det}}(\fri_\delta,0)=\overline C_{\mathrm{random}}(\fri_\delta,0)$, whence $\overline C_{\mathrm{det}}(\fri_\delta,0)>\eps$. This proves that $\overline C_{\mathrm{det}}(\cdot,0)$ is discontinuous in the point $\fri$.
\end{proof}
\begin{proof}[Proof of Lemma \ref{lemma:new-symmetrizability-condition}]
Clearly, by assumption we have that $\fri$ is $l$-symmetrizable. Fix $\sigma\in\cs(\hr^{\otimes l})$ and define $\cn_\sigma:=\sum_{s^l\in\bS^l}p[\sigma](s^l)\cn_{s^l}$. Then it holds, for all $\rho\in\cs(\hr^{\otimes l})$,
\begin{align}
\cn_\sigma(\rho)=\sum_{s^l\in\bS^l}p[\sigma](s^l)\cn_{s^l}(\rho)=\sum_{s^l\in\bS^l}q[\rho](s^l)\cn_{s^l}(\sigma).
\end{align}
Especially for every two states $\rho,\rho'$ and $\lambda\in(0,1)$ we get
\begin{align}\label{eqn:important-equality-in-new-symmetrizability-condition}
\sum_{s^l\in\bS^l}q[\lambda\rho+(1-\lambda)\rho'](s^l)\cn_{s^l}(\sigma)&=\sum_{s^l\in\bS^l}(\lambda q[\rho](s^l)+(1-\lambda)q[\rho'](s^l))\cn_{s^l}(\sigma).
\end{align}
The operators $\cn_{s^l}(\sigma)$ may be linearly dependent, so that no general conclusion can be drawn yet, we have to go one step further: The set $\conv(\cn_{s^l}(\sigma))$ has extremal points $\{\cn_t\}_{t\in T}$ for some subset $T\subset\bS^l$ and we may choose probability distributions $\{r(\cdot|s^l)\}_{s^l\in \bS^l}\subset\mathfrak P(T)$ such that we get
\begin{align}
\forall\ s^l\in\bS^l:\qquad \cn_{s^l}(\sigma)=\sum_{t\in T}r(t|s^l)\cn_{s^l}(\sigma).
\end{align}
Defining $\tilde q:\cs(\hr^{\otimes l})\to\mathfrak P(T)$ pointwise through $\tilde q[\rho](t):=\sum_{s^l\in\bS^l}q[\rho](s^l)r(t|s^l)$ this transforms the statement (\ref{eqn:important-equality-in-new-symmetrizability-condition}) into
\begin{align}
\sum_{t\in T}\tilde q[\lambda\rho+(1-\lambda)\rho'](t)\cn_t(\sigma)&=\sum_{t\in T}(\lambda \tilde q[\rho](t)+(1-\lambda)\tilde q[\rho'](t))\cn_t(\sigma).
\end{align}
Since convex combinations of extremal points are unique, this proves that for all $\rho,\rho'\in\cs(\hr^{\otimes l})$ and $\lambda\in(0,1)$ we have
\begin{align}
\tilde q[\lambda\rho+(1-\lambda)\rho']&=\lambda\tilde q[\rho]+(1-\lambda)\tilde q[\rho'].
\end{align}
In a next step we can now linearly extend the functional $\tilde q$ to all of $\mathcal B(\hr^{\otimes l})$: the (complex) linear span of $\cs(\hr^{\otimes l})$ obviously contains the set of all nonnegative operators, whose linear span clearly contains the self-adjoint operators, and their linear span finally is $\mathcal B(\hr^{\otimes l})$.\\
We have thus (since $T\subset\bS^l$) constructed a linear functional $\tilde q$ such that $\tilde q(\cs(\hr^{\otimes l}))\subset\mathfrak P(\bS^l)$ and
\begin{align}
\forall\ \rho\in\cs(\hr^{\otimes l}):\qquad\cn_\sigma(\rho)=\sum_{s^l\in\bS^l}\tilde q[\rho](s^l)\cn_{s^l}(\sigma).
\end{align}
Since each $\tilde q[\cdot](t)$ is a linear functional it may be represented as $\tilde q[\cdot](t)=\tr\{A_t\cdot\}$ for some operator $A_t$, and since it is both positivity preserving and trace preserving the operators $A_t$ can be chosen to be nonnegative and must satisfy the condition $\sum_{t\in T}A_t=\eins_{\hr^{\otimes l}}$, whence the form a POVM and it holds
\begin{align}
\forall\ \rho\in\cs(\hr^{\otimes l}):\qquad\cn_\sigma(\rho)=\sum_{s^l\in\bS^l}\tr\{A_{s^l}\rho\}\cn_{s^l}(\sigma).
\end{align}
This is equivalent to the statement that $\cn_\sigma\in\conv(\{\cn_{s^l}\})$ is entanglement breaking by the results of \cite{horodecki-shor-ruskai}, which implies that $\A_{\mathrm{random}}(\{\cn_{s^l}\}_{s^l\in\bS^l})=0$ by the results of \cite{abbn}, which is equivalent to $\A_{\mathrm{random}}(\fri,0)=0$ by the very definition of the channel model.
\end{proof}
\end{subsection}
\begin{subsection}{\label{sec:finite-randomness}Finite amounts of randomness, finite errors and blocklengths}
In this subsection, we give the proofs of the results concerning finite resources. Namely, we are interested here in the scaling laws that connect finite errors $\lambda$ and the amount of randomness needed to achieve such an error. At the same time, we are able to give estimates on the block-length that is needed to achieve a certain error.\\
In addition, we present the following Lemma which is an immediate drop-off from the results in \cite{abbn}:
\begin{lemma}[A partial strong converse for message transmission over finite AVQCs]
For every finite AVQC $\fri$ which is symmetrizable and for every $\lambda\in[0,1/2)$ we have $\overline C(\fri,\lambda)=\overline C(\fri,0)$.
\end{lemma}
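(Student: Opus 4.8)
The plan is to establish the one nontrivial inequality $\overline C_{\mathrm{det}}(\fri,\lambda)\le 0$ for $\lambda\in[0,1/2)$; combined with the trivial monotonicity $\overline C_{\mathrm{det}}(\fri,\lambda)\ge\overline C_{\mathrm{det}}(\fri,0)$ (every zero-error code is a $\lambda$-error code) and the fact proven in \cite{abbn} that symmetrizability forces $\overline C_{\mathrm{det}}(\fri,0)=0$, this yields the claimed equality. The heart of the matter is a quantum analogue of the classical symmetrizability lower bound on the error due to Csisz\'ar and Narayan: I will show that a symmetrizable AVQC forces the worst-case average error of any code with $M$ messages to be at least $\tfrac{M-1}{2M}$, a quantity which tends to $1/2$ and hence eventually exceeds $\lambda$.

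First I would argue by contradiction: suppose some rate $R>0$ were achievable with error $\lambda<1/2$. Then there is a sequence of deterministic codes, given by encodings with codeword states $\rho_i:=\cP(i)\in\cs(\hr^{\otimes l})$ for $i\in[M_l]$ and decoding POVMs $(D_i)_{i=1}^{M_l}$, satisfying $\tfrac1l\log M_l\to R>0$ (so $M_l\to\infty$) and worst-case average error $\eps_l\le\lambda+o(1)$. Fixing $l$, I invoke $l$-symmetrizability (available since $\fri$ is symmetrizable) on the finite set $\{\rho_1,\dots,\rho_{M_l}\}$, obtaining a map $p$ with
\begin{align}
\sum_{s^l\in\bS^l}p(\rho_i)(s^l)\cn_{s^l}(\rho_j)=\sum_{s^l\in\bS^l}p(\rho_j)(s^l)\cn_{s^l}(\rho_i)=:\omega_{ij},
\end{align}
where each $\omega_{ij}\in\cs(\kr^{\otimes l})$ by trace preservation, and crucially $\omega_{ij}=\omega_{ji}$.

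The key step is to let the jammer use the single, message-independent mixture $\tau:=\tfrac1{M_l}\sum_{J}p(\rho_J)\in\mathfrak P(\bS^l)$, a ``phantom-message'' averaging over a uniformly drawn competing index $J$. Since the worst-case-over-$s^l$ error dominates the $\tau$-averaged error, and since for true message $i$ the $\tau$-averaged output is exactly $\tfrac1{M_l}\sum_J\omega_{iJ}$, I obtain
\begin{align}
\eps_l\ \ge\ 1-\frac{1}{M_l^2}\sum_{i,J}\tr\{D_i\,\omega_{iJ}\}.
\end{align}
Exploiting $\omega_{iJ}=\omega_{Ji}$ and relabeling $i\leftrightarrow J$ symmetrizes the double sum into $\tfrac{1}{2M_l^2}\sum_{i,J}\tr\{(D_i+D_J)\omega_{iJ}\}$. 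Using $D_i+D_J\le\sum_k D_k\le\idn$ for $i\ne J$ (and the crude bound $\tr\{2D_i\omega_{ii}\}\le 2$ for $i=J$) together with $\tr\{\omega_{iJ}\}=1$, this sum is at most $\tfrac{M_l+1}{2M_l}$, giving $\eps_l\ge\tfrac{M_l-1}{2M_l}$. As $M_l\to\infty$ this lower bound tends to $1/2$, so for $l$ large it exceeds $\lambda$, contradicting $\eps_l\le\lambda+o(1)$ and forcing $R=0$.

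I expect the main obstacle to be bookkeeping rather than conceptual: the proof is a faithful transcription of the classical argument, with output distributions replaced by the states $\omega_{ij}$ and decoding sets by POVM elements, which is exactly why the Lemma is an immediate drop-off from the converse machinery in \cite{abbn}. The one point requiring genuine care is that the jammer's effective attack must be a bona fide, \emph{message-independent} element of $\mathfrak P(\bS^l)$ so that the worst-case criterion can legitimately be lower-bounded by an average; the mixture $\tau$ accomplishes precisely this, after which the symmetry $\omega_{ij}=\omega_{ji}$ does all the work. A final check is that $l$-symmetrizability is applied at the correct block length to the actual codeword states, and that the operator inequality $D_i+D_J\le\idn$ indeed follows from $\sum_i D_i\le\idn$ with $D_i\ge0$.
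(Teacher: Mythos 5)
Your proof is correct and takes essentially the same route as the paper: the paper's proof is just a citation of the proof of Theorem 40 in \cite{abbn} (the quantum Csisz\'ar--Narayan symmetrizability converse) ``with a slightly better estimate in between equations (202) and (203)'', and your explicit argument --- applying $l$-symmetrizability to the codeword states, letting the jammer play the message-independent mixture $\tau=\frac{1}{M_l}\sum_J p(\rho_J)$, and symmetrizing via $\omega_{iJ}=\omega_{Ji}$ to get $\eps_l\ge\frac{M_l-1}{2M_l}\to 1/2$ --- is precisely that argument with the sharpened estimate needed to cover all $\lambda\in[0,1/2)$. All steps check out, including the operator bound $D_i+D_J\le\eins$ for $i\neq J$ and the reduction of the worst-case error to the $\tau$-average.
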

\begin{proof}
This follows directly from the proof of Theorem 40 in \cite{abbn}, with a slightly better estimate in between equations (202) and (203).
\end{proof}
We now turn to the original goal of this subsection, the proof of Theorem \ref{thm:random-code-reduction}:
\begin{proof}[Proof of Theorem \ref{thm:random-code-reduction}]
Let $l\in\nn$, $\eps>0$ and $\delta:=\overline E_m(\conv(\fri),\overline C_{\mathrm{compound}}(\conv(\fri)))-\eps$. If only $l$ is large enough, then by definition of $E$ there is a code $\{\sigma_i,P_i\}_{i=1}^{M_l}$ with $M_l=\lfloor2^{l(\overline C_{\mathrm{compound}}(\conv(\fri))-\eps)}\rfloor$ such that
\begin{align}
\frac{1}{M_l}\sum_{i=1}^{M_l}\tr\{P_i\cn^{\otimes n}(\sigma_i)\}\geq1-2^{-l\delta}\qquad\forall\cn\in\conv(\fri)
\end{align}
and that $\delta$ is the largest possible such value over all choices of codes at the same rate and same blocklength. By application of the robustification technique (find the original in \cite{ahlswede-coloring}, Theorem 6, an improved version in \cite{ahlswede-gelfand-pinsker} or read Theorem 28 in \cite{abbn}) we get a random code $\mu_l$ at the same rate such that
\begin{align}
e(\mu_l,\cn):=\min_{s^l\in\bS^l}\int \frac{1}{M_l}\sum_{i=1}^{M_l}\tr(\cn_{s^l}(\rho_i)D_i^l)d\mu_l((\rho_i,D_i^l)_{i=1}^{M_l})\ge
1-(l+1)^{|\bS|}2^{-l\delta}.
\end{align}
We will now use the abbreviation $\eps_l:=(l+1)^{|\bS|}2^{-l\delta}$. For a fixed $K\in\nn$, consider $K$ independent random variables $\Lambda_i$ with values in $((\cs(\hr^{\otimes l})^{M_l})\times\M_{M_l}(\hr^{\otimes l}))$
which are distributed according to
$\mu_l$.\\
Define, for each $s^l\in\bS^l$, the function $p_{s^l}:((\cs(\hr^{\otimes l})^{M_l})\times\M_{M_l}(\hr^{\otimes l}))\rightarrow[0,1]$,\\
$(\rho_1,\ldots,\rho_{M_l},D_1^l,\ldots,D_{M_l}^l)\mapsto\frac{1}{M_l}\sum_{i=1}^{M_l}\tr(\cn_{s^l}(\rho_i)D_i^l)$.\\
We get, by application of Markovs inequality, for every $s^l\in\bS^l$, and every $r\geq0$:
\begin{align}
\mathbb P(1-\frac{1}{K}\sum_{j=1}^{K}p_{s^l}(\Lambda_j)\geq\lambda)&= \mathbb P(2^{rK- r\sum_{j=1}^Kp_{s^l}(\Lambda_j)}\geq2^{rK\lambda})\\
&\leq2^{-rK\lambda}\mathbb E(2^{ (rK-r\sum_{j=1}^Kp_{s^l}(\Lambda_j))}).
\end{align}
The $\Lambda_i$ are independent and it holds $2^{rt}\leq1+t2^r$ for every $t\in[0,1]$ and $r\geq0$ as well as $\log(1+x)\leq2x$ for $x\geq0$ and so we get
\begin{align}
 \mathbb P(1-\frac{1}{K}\sum_{j=1}^{K}p_{s^l}(\Lambda_j)\geq\lambda)&\leq 2^{-rK\lambda}\mathbb E(2^{rK-r\sum_{j=1}^Kp_{s^l}(\Lambda_j)})\\
&=2^{-rK\lambda}\mathbb E(2^{r\sum_{j=1}^K(1-p_{s^l}(\Lambda_j))})\\
&=2^{-rK\lambda}\mathbb E(2^{(r-rp_{s^l}(\Lambda_1))})^K\\
&\leq2^{-rK\lambda}\mathbb E(1+(1-p_{s^l}(\Lambda_1)2^r))^K\\
&\leq2^{-rK\lambda}\mathbb E(1+\eps_l2^r)^K\\
&\leq2^{-rK\lambda}2^{K\eps_l2^r}\\
&=2^{-K(r\lambda-\eps_l2^r)}.
\end{align}
Therefore, and with the choice $r=l\delta/2$, we get by application of the union bound that
\begin{align}
  \mathbb P(\frac{1}{K}\sum_{j=1}^{K}p_{s^l}(\Lambda_j)\geq1-\lambda\ \forall s^l\in\bS^l)&\geq1-|\bS|^l2^{-K(l\delta\lambda/2-(l+1)^{|\bS|}2^{-l\delta/2})}\\
  &=1-2^{-l(K(\delta\lambda/2-\frac{(l+1)^{|\bS|}}{l}2^{-l\delta/2})-\log(|\bS|))}.
\end{align}
But this shows that, for $K>\log(|\bS|)/(\frac{\delta\lambda}{2}-\frac{(l+1)^{|\bS|}}{l}2^{-l\delta/2})$, the above probability is larger than zero, so there exists a realization
$\Lambda_1,\ldots,\Lambda_{l^n}$ such that
\begin{equation}
\min_{s^l\in\bS^l}\frac{1}{K}\sum_{i=1}^{K}\frac{1}{M_l}\tr(\cn_{s^l}(\rho_i)D_i^l)\geq1-\lambda.
\end{equation}
The important conclusion we draw from this result is that, for $l$ large enough ($l\geq L(\delta,\lambda,|\bS|)$ for some $L(\delta,\lambda,|\bS|)\in\nn$), $\frac{(l+1)^{|\bS|}}{l}2^{-l\delta/2}<\delta\lambda/4$, and then for $K=8\log(|\bS|)/\delta\lambda$ we are the guaranteed existence of random codes that only use a finite amount of randomness. This proves the theorem.
\end{proof}
\begin{proof}[Proof of Theorem \ref{thm:AsequalsA}]
At last, let us consider the differences between entanglement- and strong subspace transmission. Let $l\in\nn$ be arbitrary. It has been shown in \cite{abbn}, their Lemma 21, that every subspace $\fr_l$ for entanglement transmission (with shared randomness or without) over an (even non-finite) AVQC $\fri$ with error $\lambda_l\in[0,1]$ contains a subspace $\hat\fr_l\subset\fr_l$ for strong subspace transmission over $\fri$ with error $\hat\lambda_l\leq\lambda_l+c/\sqrt{k_l-1}+\eps_l$, where $k_l=\dim\fr_l$ and $\dim\hat\fr_l=\lfloor\frac{\eps_l^2}{256\log(32/\eps_l)}k_l\rfloor$, while $\eps_l$ can be suitably chosen.\\
These results were then used in \cite{abbn} to conclude that, with a specific choice of $\eps_l$ and with $R:=\liminf_{l\to\infty}\frac{1}{l}\log k_l$, one can obtain an error $\hat\lambda_l\leq\lambda_l+c/\sqrt{k_l-1}+2^{-l\cdot R/4}$ and such that $R=\liminf_{l\to\infty}\frac{1}{l}\log\hat k_l$.\\
The implication for the $\lambda$-capacities is that the following estimates hold:
\begin{align}\label{eqn:AsgeqA}
\A_{\mathrm{s,det}}(\fri,\lambda)\geq\A_{\mathrm{det}}(\fri,\lambda)
\qquad\mathrm{and}\qquad\A_{\mathrm{s,random}}(\fri,\lambda)\geq\A_{\mathrm{random}}(\fri,\lambda).
\end{align}
The reverse statements can be proven by application of Lemma 17 in \cite{abbn}, which originally appeared in \cite{horodecki-horodecki-horodecki}: If a strong subspace transmission code for an AVQC $\fri$ with error $\lambda_l\in[0,1]$ is given, then this directly implies the existence of an entanglement transmission code with error $\hat\lambda_l=1-\frac{k_l+1}{k_l}(1-\lambda_l-\frac{1}{k_l+1})$, and of course if $\liminf_{l\to\infty}\frac{1}{l}\log k_l=R>0$ and $\limsup_{l\to\infty}\lambda_l=\lambda$ then $\limsup_{l\to\infty}\hat\lambda_l=\lambda$ and thus: If the r.h.s. below are nonzero, then
\begin{align}
\mathcal A_{\mathrm{det}}(\fri,\lambda)\geq\mathcal A_{\mathrm{s,det}}(\fri,\lambda)\qquad\mathrm{and}\qquad\mathcal A_{\mathrm{random}}(\fri,\lambda)\geq\mathcal A_{\mathrm{s,random}}(\fri,\lambda)
\end{align}
holds. But by equations (\ref{eqn:AsgeqA}), if the r.h.s. above are zero, then the l.h.s. are zero as well.
\end{proof}
\end{subsection}
\end{section}
\emph{Acknowledgements.}
This work was supported by the DFG via grant BO 1734/20-1 (H.B.) and by the BMBF via the grants 01BQ1050 and 16KIS0118 (H.B., J.N.).


\begin{thebibliography}{99}

\bibitem{ahlswede-note} R. Ahlswede, ``A Note on the Existence of the Weak Capacity for Channels with Arbitrarily
Varying Channel Probability Functions and Its Relation to Shannon's Zero Error
Capacity'' \emph{The Annals of Mathematical Statistics}, Vol. 41, No. 3. (1970)

\bibitem{ahlswede-elimination}
R. Ahlswede, ``Elimination of Correlation in Random Codes for Arbitrarily Varying Channels'', \emph{Z. Wahrscheinlichkeitstheorie verw. Gebiete} 44, 159-175 (1978)

\bibitem{ahlswede-coloring}
R. Ahlswede, ``Coloring Hypergraphs: A New Approach to Multi-user Source Coding-II'', \emph{Journal of Combinatorics, Information \& System Sciences} Vol. 5, No. 3, 220-268 (1980)

\bibitem{ahlswede-gelfand-pinsker}

R. Ahlswede, ``Arbitrarily Varying Channels with States Sequence Known to the Sender'', \emph{IEEE Trans. Inf. Th.} Vol. 32, 621-629, (1986)

\bibitem{ahlswede-blinovsky} R. Ahlswede, V. Blinovsky, ``Classical capacity of classical-quantum arbitrarily varying channels'', IEEE Trans. Inf. Theory, Vol. 53, No. 2, 526-533.

\bibitem{ahlswede-cai} R. Ahlswede, N. Cai, ``Correlated sources help the transmission over AVC'', \emph{IEEE Trans. Inf. Th.}, Vol. 43, No. 4, 1254-1255 (1997)

\bibitem{abbn} R. Ahlswede, I. Bjelakovic, H. Boche, J. N\"otzel ``Quantum capacity under adversarial noise: arbitrarily varying quantum channels'', \emph{Comm. Math. Phys.}, Vol. 317, Iss. 1, 103-156 (2013)

\bibitem{ahlswede-wolfowitz} R. Ahlswede, J. Wolfowitz, ``The structure of capacity functions for compound channels'', \emph{Proc. of the Internat. Symposium on Probability and Information Theory at McMaster University, Canada}, 12-54,  (1968)

\bibitem{ahlswede-wolfowitz-2} R. Ahlswede, J. Wolfowitz, ``The Capacity of a Channel with Arbitrarily Varying Channel Probability Functions and Binary Output Alphabet'' Z. Wahrscheinlichkeitstheorie verw. Geb. 15, 186-194 (1970)

\bibitem{barnum-smolin-terhal}, H. Barnum, J. A. Smolin, B. M. Terhal, ``Quantum capacity is properly defined without encodings'', \emph{Phys. Rev. A}, Vol. 58, No. 5, 3496-3501 (1998)

\bibitem{bb-compound}
I. Bjelakovi\'c, H. Boche, ``Classical Capacities of Averaged and Compound Quantum Channels'', \emph{IEEE Trans. Inf. Th.} Vol. 55,  No. 7, 3360 - 3374 (2009)

\bibitem{bbjn} I. Bjelakovic, H. Boche, G. Jan\ss en, J. N\"otzel, ``Arbitrarily varying and compound classical-quantum channels and a note on quantum zero-error capacities'', \emph{Information Theory, Combinatorics, and Search Theory
LNCS} Vol. 7777, 247-283 (2013)

\bibitem{bbn-2}
I. Bjelakovi\c{c}, H. Boche, J. N\"otzel, ``Entanglement transmission and generation under channel uncertainty: Universal quantum channel coding'', \emph{Commun. Math. Phys.} 292, 55-97 (2009)

\bibitem{bbs} I. Bjelakovi\c{c}, H. Boche, J. Sommerfeld, ``Secrecy Results for Compound Wiretap Channels'', \emph{
Probl. Inf. Trans.} Vol. 49, Iss. 1, 73-98 (2013)

\bibitem{bbt-avc}
D. Blackwell, L. Breiman, A.J. Thomasian, ``The capacities of certain channel classes under random coding'', \emph{Ann. Math. Stat.} 31, 558-567 (1960)

\bibitem{bcd} H. Boche, M. Cai, C. Deppe, ``Classical-Quantum Arbitrarily Varying Wiretap Channel - A Capacity Formula with Ahlswede Dichotomy - Resources'', \emph{arXiv:1307.8007} (2013)

\bibitem{bn-correlation} H. Boche, J. N\"otzel, ``Arbitrarily small amounts of correlation for arbitrarily varying quantum channels'', \emph{J. Math. Phys.} Vol. 54, 112202 (2013)

\bibitem{csiszar-narayan}
I. Csiszar, P. Narayan, ``The Capacity of the Arbitrarily Varying Channel Revisited: Positivity, Constraints'', \emph{IEEE Trans. Inf. Th.} Vol. 34, No. 2, 181-193 (1989)

\bibitem{devetak}
I. Devetak, ``The private classical capacity and quantum capacity of a quantum channel'', \emph{IEEE Trans. Inf. Th.} 51, No.1, 44-55 (2005)

\bibitem{duan-severini-winter}
R. Duan, S. Severini, A. Winter, ``Zero-error communication via quantum channels, non-commutative graphs and a quantum Lov\'asz $\theta$ function'', \emph{IEEE Trans. Inf. Theory},  59(2):1164-1174, (2013)

\bibitem{ericson}
T. Ericson, ``Exponential Error Bounds for Random Codes in the Arbitrarily Varying Channel'', \emph{IEEE Trans. Inf. Th.} Vol. 31, No. 1, 42-48 (1985)

\bibitem{horodecki-horodecki-horodecki}M. Horodecki, P. Horodecki, R. Horodecki, ``General teleportation channel, singlet fraction, and quasidistillation'', Phys. Rev. A 60, 1888–1898 (1999)

\bibitem{horodecki-shor-ruskai} M. Horodecki, P. W. Shor, M. B. Ruskai, ``General Entanglement Breaking Channels'', \emph{Rev. Math. Phys}, Vol. 15, 629-641 (2003)
\bibitem{keyl-werner-quantum-errors} M. Keyl, R. F. Werner, ``How to Correct Small Quantum Errors'', \emph{LNP} Vol. 611, 263-286 (2002)

\bibitem{kiefer-wolfowitz} J. Kiefer, J. Wolfowitz, ``Channels with arbitrarily varying channel probability functions'', \emph{Information and Control} Vol. 5, 44-54 (1962)

\bibitem{kitaev}
A.Yu. Kitaev, A.H. Shen, M.N. Vyalyi, \emph{Classical and Quantum Computation}, Graduate Studies in Mathematics 47, American Mathematical Society, Providence, Rhode Island (2002)

\bibitem{leung-smith}
D. Leung, G. Smith, ``Continuity of quantum channel capacities'', \emph{Commun. Math. Phys.} Vol. 292, 201-215, (2009)

\bibitem{noetzel-thesis} J. N\"{o}tzel, ``Quantum communication under channel uncertainty'', Ph.D. thesis, http://nbn-resolving.de/urn/resolver.pl?urn:nbn:de:bvb:91-diss-20121031-1100407-1-5 (2013)

\bibitem{problempage} Quantum information problem page of the ITP Hannover, http://qig.itp.uni-hannover.de/qiproblems/11

\bibitem{shannon} C. E. Shannon,  ``The zero error capacity  of a  noisy channel''. \emph{IRE  Trans.  Information Theory  IT-2},
8-19  (1956)

\bibitem{schumacher} B. Schumacher, ``Sending entanglement through noisy quantum channels'' \emph{Phys. Rev. A} 54, 2614 (1996)

\bibitem{wb} M. Wiese, H. Boche, ``Strong Secrecy for Multiple Access Channels'', \emph{Information Theory, Combinatorics, and Search Theory
LNCS} Vol. 7777, 71-122, (2013)

\bibitem{rudin-principles-of-mathematical-analysis} W. Rudin, \emph{``Principles of Mathematical Analysis''}, 3rd. edition, Mc Graw-Hill, Inc. 1976

\end{thebibliography}
\end{document}